\newcommand\be{\begin{equation}}
\newcommand\ee{\end{equation}}
\newcommand\ber{\begin{eqnarray}}
\newcommand\eer{\end{eqnarray}}
\newcommand\berr{\begin{eqnarray*}}
\newcommand\eerr{\end{eqnarray*}}
\newcommand\bea{\begin{eqnarray}}
\newcommand\eea{\end{eqnarray}}
\newcommand\ba{\begin{array}}
\newcommand\ea{\end{array}}
\newcommand\re{\mathrm{e}}
\newcommand{\bi}{\begin{itemize} }
  \newcommand{\ei}{\end{itemize} }
\newtheoremstyle{mythm}{1.5ex plus 1ex minus .2ex}{1.5ex plus 1ex
minus .2ex}{\kai}{\parindent}{\song\bfseries}{}{1em}{}
\numberwithin{equation}{section}\numberwithin{figure}{section}
\newtheorem{theorem}{Theorem}[section]
\newtheorem{lemma}{Lemma}[section]
\begin{document}
\title{Existence of $U(1)$ Gauged Q-balls for A Field Model with Sixth-order Potential}
\author{Xiaosen Han and Guange Su\\Institute of Contemporary Mathematics\\School of Mathematics and Statistics\\Henan University\\
Kaifeng, Henan 475004, P. R.  China }

\date{}
\maketitle

\begin{abstract}

Q-balls are non-topological solitons in a large family of field theories. We focus on the existence of $U(1)$ gauged Q-balls for a field theory with sixth-order potential. The problem can be reduced to proving the existence of critical points for some indefinite functional. For this, we use a constrained minimization approach to obtain the existence of critical points. Moreover, we establish some qualitative properties of the Q-ball solution, such as monotonicity, boundedness and asymptotic behavior.

\end{abstract}

{\bf Key words.} Gauged Q-balls; Constrained minimization; Asymptotic behavior.
\medskip

%{\bf PACS numbers.} 02.30.Jr, 02.30.Xx, 11.15.-q, 74.25.Ha
\medskip

%{\bf MSC numbers.} 35J50, 53C43, 81T13

\section{Introduction}

The study of Q-balls has been a subject of interest for several decades. Q-balls are  non-topological solitons arising in complex field theories with a $U(1)$ symmetry  in the 3+1 dimensional Minkowski spacetime\cite{FriedbergLee2,colem85}. It was suggested that such configurations are stable against decay into smaller Q-balls or individual particles \cite{Rosen996}. See a review  \cite{LeePang}, and a recent one \cite{NuagevSh}. Q-balls may exist in models possessing $U(1)$ gauge symmetry and self-interaction potentials \cite{Bishara,Anagno,HongKa,colem85}. Q-balls carry a Noether charge $Q$ which also can be interpreted as the particle number.

Q-balls have been phenomenologically  considered as candidates for macroscopic dark matter \cite{KusenkoSh,KusenkoSt,KusenkoKu,Ponton,Enqvist}. They  may play an important role in primordial phase transition of the early universe \cite{FriedbergLee1,Krylov,Bailu,whitePe},  giving rise to observable gravitational waves \cite{Croon,Bosba,ChibaKa}. Q-balls can be classified into two types, one is called global Q-ball, and the other is named gauged Q-ball. The global Q-ball~only exists in the complex scalar field model with $U(1)$ global symmetry, while the gauged Q-ball~appears in the complex scalar field  model with $U(1)$ gauge (local) symmetry. The gauged~Q-ball~configuration is such that inside the Q-ball the local $U(1)$ symmetry is broken, while  outside the Q-ball the local $U(1)$ symmetry is not. A simple generalization of non-topological solitons from global to gauged $U(1)$~symmetry was firstly  analyzed in the pioneering work by Rosen \cite{Rosen999}. Gauged Q-balls have been studied in numerous works both theoretically and numerically \cite{LoiShn,KinachChop,HeeckRa1, HeeckRa2,PaninSm,NuagevSh,LeeStein,Benci,Gulamov14,LeeYoon,LeviGleiser,ArodzLis,Brihaye,Tamaki,Gulamov15}.

The  purpose  of the present work is to establish an  existence theorem and to obtain  related properties of the gauged Q-balls for a field model with sixth-order potential, proposed in \cite{yu}.  We reduce the problem to the existence of critical points of some indefinite functional. It is not straightforward to directly obtain such critical points. Therefore, to overcome this difficulty, we carry out a constrained  minimization approach, which was first proposed  by \cite{SchecWe},  and extended afterwards in \cite{LinYang,Yang98,GaoYang,ChenGuoDY,Yangbook}. We note that this model has an obvious trivial solution. By estimating the action functional we prove that the constrained minimizer is in fact a non-trivial solution. This provides an effective framework to build the~Q-ball solutions.

Following \cite{yu}, the gauged model which describes the self-interacting complex scalar field $\phi$ minimally interacting with the Abelian gauge field $A_{\mu}$ in Minkowski spacetime $\mathbb{R}^{3,1}$ has the Lagrangian density
\begin{equation}
\mathscr{L}=-\frac{1}{4}F_{\mu\nu}F^{\mu\nu}+(D_{\mu}\phi)^{*}D^{\mu}\phi-V(\phi),\notag
\end{equation}
where $F_{\mu\nu}=\partial_{\mu}A_{\nu}-\partial_{\nu}A_{\mu}$ is the strength  tensor, $D_{\mu}\phi=\partial_{\mu}\phi+ieA_{\mu}\phi$ denotes the covariant derivative, $e$ is the gauge coupling constant and the self-interacting potential of the complex scalar field is
\begin{equation}
V(|\phi|)=m^{2}|\phi|^{2}-\frac{h_{1}}{2}|\phi|^{4}+\frac{h_{2}}{3}|\phi|^{6},\notag
\end{equation}
with   positive self-interaction coupling constants $h_{1}$ and $h_{2}$. The potential $V(|\phi|)$ has a global minimum at $\phi=0$  thus  the parameters must satisfy the inequality $3h_{1}^{2}<16h_{2}m^{2}$.

By varying the action $S =\int\mathscr{L}\mathrm{d}^{3}x\mathrm{d}t$ in the corresponding fields, we obtain the field equations of the model as follows
\begin{align}
D_{\mu}D^{\mu}\phi+m^{2}\phi-h_{1}|\phi|^{2}\phi+h_{2}|\phi|^{4}\phi&=0,\label{m}\\
\partial_{\mu}F^{\mu\nu}&=j^{\nu},\label{n}
\end{align}
where the electromagnetic current density $j^{\nu}=ej_{N}^{\nu}$ is expressed in terms of Noether current
\begin{equation}
j^{\nu}_{N}=i[\phi^{*}D^{\nu}\phi-(D^{\nu}\phi)^{*}\phi].\notag
\end{equation}
The total electric charge of the gauged Q-ball is given by
\ber
Q=4\pi\int_{0}^{\infty}j_{0}(r)r^{2}\mathrm{d}r.
\eer

To obtain  static  solutions, one  uses  the following ansatz
\begin{equation}\label{01}
\phi(x,t)=\frac{f(r)}{\sqrt{2}},~~A^{\mu}(x,t)=\eta^{\mu_{0}}A_{0}(r),
\end{equation}
where $f(r)$ and $A_{0}(r)$ are   real profile functions depending on the radical variable $r$.
Substituting ansatz (\ref{01}) into field equations (\ref{m}) and (\ref{n}), we obtain a system of nonlinear differential equations for the profile functions $f(r)$ and $g(r)=-eA_{0}(r)$:
\begin{align}
f''+\frac{2}{r}f'-(m^{2}-g^{2})f+\frac{h_{1}}{2}f^{3}-\frac{h_{2}}{4}f^{5}&=0,\label{a1}\\
g''+\frac{2}{r}g'-e^{2}gf^{2}&=0.\label{a2}
\end{align}
The corresponding expression for the energy in terms of the profile functions can be written as
\begin{equation}\label{a3}
E(f,g)=4\pi\int^{\infty}_{0}\frac{1}{2}\left\{(f')^2+\frac{1}{e^{2}}(g')^2+g^{2}f^{2}+m^{2}f^{2}-\frac{h_{1}}{4}f^{4}+\frac{h_{2}}{12}f^{6}\right\}r^{2}\mathrm{d}r.\notag
\end{equation}
With (\ref{01}), the electric charge $Q$ can be written as
\ber
Q=4\pi e\int_{0}^{\infty}gf^{2}r^{2}\mathrm{d}r.
\eer
The regularity of the Q-ball field configurations and the finiteness of the Q-ball's energy lead to the boundary conditions for the profile functions
\begin{equation}\label{a4}
f'(0)=0, g'(0)=0;
\end{equation}
\begin{equation}\label{a5}
 f(\infty)=0, g(\infty)=g_{\infty},
\end{equation}
where $g_{\infty}\neq0$ is a constant.  For the case $g_\infty<0$, it can be treated similarly as the case  $g_\infty>0$. Therefore, we just need to handle the case $g_{\infty}>0$ in this paper.

Our existence results regarding gauged Q-balls governed by the system of nonlinear differential equations (\ref{a1})-(\ref{a2}) with the boundary condition  (\ref{a4})-(\ref{a5}) may be stated as follows.

\begin{theorem}\label{THM1}
Assume the parameters satisfying $e, m,  h_1, h_2>0$  and
\begin{equation} \label{aa5}
0<g_{\infty}<m, ~~~h_{1}^{2}<\frac{16}{3}h_{2}(m^{2}-g_{\infty}^{2}).
\end{equation}
Consider the system of nonlinear differential equations (\ref{a1})-(\ref{a2}) subject to the boundary conditions (\ref{a4})-(\ref{a5}) governing a pair of profile functions $f(r)$ and $g(r)$ describing the self-interacting complex scalar field $\phi$ and Abelian gauge field $A_{\mu}$ represented by the ansatz (\ref{01}). Then the boundary value problem  always has a positive solution $(f,g)$, which satisfies the following.
\begin{itemize}
\item[(i)] The profile functions $f$ and $g$ are bounded on $(0,\infty)$, more precisely,
 \begin{equation}
    0<f(r)<\sqrt{\frac{2h_{1}}{{h_{2}}}}, ~~0<g(r)<g_{\infty}. \notag
    \end{equation}
\item[(ii)] The profile function $g$ is strictly increasing for all $r>0$.
\item[(iii)] Near $r=0$, $f'$ and $g'$ satisfy the following asymptotic estimates
\ber
f'(r)=O(r),~g'(r)=O(r).\notag
\eer
\item[(iv)] There hold the asymptotic estimates as $r\rightarrow\infty$
    \begin{equation}
   f(r)=O\left(r^{-1}\exp{(-\sqrt{m^{2}-g^{2}_{\infty}}(1-\varepsilon)r})\right),~g(r)=g_{\infty}+O(r^{-1}),\notag
    \end{equation}
where $0<\varepsilon<1$ is arbitrary.
\item[(v)] The electric charge $Q$ given by the integral
\ber
Q=4\pi e\int_{0}^{\infty}gf^{2}r^{2}\mathrm{d}r
\eer
depends on $g_{\infty}$ and approaches zero as $g_{\infty}\rightarrow0$.
\end{itemize}
\end{theorem}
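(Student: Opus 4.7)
\medskip
\noindent\textbf{Proof proposal.} The plan is to reformulate (\ref{a1})--(\ref{a2}) variationally after shifting $g$ to have a vanishing limit at infinity, to reduce the indefinite two-field functional to a problem in $f$ alone by solving the $g$-equation explicitly for each admissible $f$, and to then apply a constrained minimization in the spirit of Schechter--Weder. First I set $h:=g_\infty-g$, so that the boundary data become $h'(0)=0$, $h(\infty)=0$, and the system becomes
\begin{align*}
-\left(f''+\tfrac{2}{r}f'\right)+(m^{2}-g_\infty^{2}+2g_\infty h-h^{2})f&=\tfrac{h_{1}}{2}f^{3}-\tfrac{h_{2}}{4}f^{5},\\
-\tfrac{1}{e^{2}}\left(h''+\tfrac{2}{r}h'\right)+f^{2}h&=g_\infty f^{2}.
\end{align*}
This is the Euler--Lagrange system of a functional $I(f,h)$ on an appropriate radially weighted Sobolev space with measure $r^{2}\mathrm{d}r$; the presence of the term $-\frac12 h^{2}f^{2}$ makes $I$ indefinite, which is what forces a constrained approach.

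The second equation is linear in $h$ with coercive bilinear form for any $f\not\equiv 0$, so a Lax--Milgram argument gives a unique solution $h=h_{f}$ depending smoothly on $f$. A sub/super-solution comparison then yields $0<h_{f}<g_\infty$ pointwise, equivalently $0<g<g_\infty$, which will serve as assertion (i) for $g$. Substituting $h=h_{f}$ gives the reduced, definite-in-sign functional
\[
\widetilde I(f)\;=\;I(f,h_{f})\;=\;2\pi\!\int_{0}^{\infty}\!\Big\{(f')^{2}+(m^{2}-g_\infty^{2})f^{2}+g_\infty f^{2}h_{f}-\tfrac{h_{1}}{4}f^{4}+\tfrac{h_{2}}{12}f^{6}\Big\}r^{2}\,\mathrm{d}r,
\]
in which the assumption $0<g_\infty<m$ makes the quadratic part coercive, and $h_{1}^{2}<\tfrac{16}{3}h_{2}(m^{2}-g_\infty^{2})$ will be used to keep the lower-order structure under control.

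Existence of a nontrivial critical point is obtained by a constrained minimization: on the weighted radial space I would minimize the quadratic plus sextic part of $\widetilde I$ subject to a normalization $\int_{0}^{\infty}f^{4}r^{2}\,\mathrm{d}r=\sigma$ for a fixed $\sigma>0$. Coercivity and weak lower semicontinuity are straightforward, and compactness of the embedding of radial functions (Strauss-type) into the relevant $L^{p}$ spaces with the $r^{2}$ weight rules out escape to infinity in a minimizing sequence $\{f_{n}\}$; the continuous dependence $f\mapsto h_{f}$ passes to the weak limit. Thus a minimizer $f^{*}\ge0$ exists. The key estimate, and the main obstacle, is showing that $f^{*}\not\equiv 0$: one must prove that the infimum is strictly less than the value on the trivial configuration $(0,0)$. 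This is where $h_{1}^{2}<\tfrac{16}{3}h_{2}(m^{2}-g_\infty^{2})$ enters as a quantitative assumption; I would construct an explicit trial function (for instance a rescaled bump of the form $f_{\lambda}(r)=\lambda\chi(\lambda r)$ for a fixed radial profile $\chi$ and parameter $\lambda$ suitably tuned) and show by scaling that the quartic term $-\frac{h_{1}}{4}f^{4}$ dominates the positive terms on a nonempty range, so that $\widetilde I(f_{\lambda})<0$ at some scale; the hypothesis on $h_{1},h_{2}$ enters in making this trial evaluation possible. A Lagrange multiplier then gives a solution of the original system up to a harmless rescaling of the coupling $\sigma$.

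Once the solution $(f,g)$ is obtained, the qualitative conclusions follow by elementary ODE and maximum-principle arguments. The strong maximum principle applied to (\ref{a1}) at an interior extremum gives $f<\sqrt{2h_{1}/h_{2}}$ and the strict positivity $f>0$; comparison in (\ref{a2}) yields the bound $g<g_\infty$ and, since $-(r^{2}g')'+e^{2}r^{2}gf^{2}=0$ reads $(r^{2}g')'=e^{2}r^{2}gf^{2}\ge 0$, integration shows $g'>0$ for $r>0$, i.e., the monotonicity (ii). The local expansions in (iii) follow by integrating (\ref{a1})--(\ref{a2}) twice against the singular factor $2/r$ near $0$ and using $f'(0)=g'(0)=0$. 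For (iv) I linearize near the limits: for $r\to\infty$, (\ref{a1}) is asymptotically $f''+\tfrac{2}{r}f'-(m^{2}-g_\infty^{2})f\simeq 0$, and a comparison with the Yukawa kernel gives the decay $O\!\left(r^{-1}e^{-\sqrt{m^{2}-g_\infty^{2}}(1-\varepsilon)r}\right)$; the same machinery applied to $h=g_\infty-g$, combined with the exponential decay of $f^{2}$, yields $h=O(r^{-1})$. Finally, (v) follows from dominated convergence: as $g_\infty\to 0$ the uniform bound on $f$ and the exponential decay above make $\int_{0}^{\infty}gf^{2}r^{2}\,\mathrm{d}r\le g_\infty\int_{0}^{\infty}f^{2}r^{2}\,\mathrm{d}r\to 0$, provided the $L^{2}$ norm of $f$ is shown to remain bounded uniformly in $g_\infty$, which in turn comes from the a priori energy estimate produced in the minimization step.
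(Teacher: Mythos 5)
Your reduction is, in substance, the paper's own device: your $h_f$ is exactly the paper's $g_f$ (your Lax--Milgram step replaces the direct minimization of $J_f$ in Lemma \ref{l0}, and the constraint (\ref{a10}) is precisely the weak form of your $h$-equation), and your treatment of (i)--(v) follows the same maximum-principle and comparison-function arguments as Section 5. The genuine gap is in the one step that is the heart of the theorem: nontriviality of the minimizer. Under hypothesis (\ref{aa5}) the polynomial $(m^{2}-g_{\infty}^{2})-\frac{h_{1}}{4}t^{2}+\frac{h_{2}}{12}t^{4}$ has negative discriminant --- this is exactly (\ref{b2})--(\ref{b3}) in the paper --- so it is bounded below by a constant $c>0$. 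Since $0\le h_{f}\le g_{\infty}$ makes the term $g_{\infty}f^{2}h_{f}$ nonnegative, your own reduced formula gives (up to normalization)
\begin{equation*}
\widetilde I(f)\;\ge\;\frac{1}{2}\int_{0}^{\infty}\left((f')^{2}+cf^{2}\right)r^{2}\,\mathrm{d}r\;>\;0\qquad\text{for every } f\not\equiv 0,
\end{equation*}
so no trial function of any shape or scale can achieve $\widetilde I(f_{\lambda})<0$: the quartic term can never dominate, and the hypothesis $h_{1}^{2}<\frac{16}{3}h_{2}(m^{2}-g_{\infty}^{2})$ plays precisely the opposite role from the one you assign it --- it is the coercivity condition, not a condition enabling a negative trial value. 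The paper extracts the needed negativity from an entirely different source: the gauge kinetic term $-(g')^{2}/e^{2}$ evaluated on a spread-out plateau trial pair $(h,g_{h})$ at the end of Lemma \ref{l5}. That mechanism is structurally unavailable in your formulation, because $g_{f}$ minimizes $J_{f}$ and hence maximizes $I(f,\cdot)$, so your exact elimination converts the negative gauge contribution into the nonnegative term $+g_{\infty}h_{f}f^{2}$. As written, the step ``show $\widetilde I(f_{\lambda})<0$ by scaling'' is not merely unproven but false, and your proposal has no workable mechanism for excluding the trivial solution. (Your identity in fact shows that any source of negativity must be sought in the $g$-sector \emph{before} it is optimized away, which is exactly what the paper's trial-function argument attempts; your computation is a useful lens for scrutinizing that argument, e.g.\ the $R$-dependence of the constant $c_{1}=e^{2}h_{0}^{2}g_{h}(1)/3$ there.)

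A second, independent defect is the Lagrange multiplier generated by your normalization $\int_{0}^{\infty}f^{4}r^{2}\,\mathrm{d}r=\sigma$ (which is also in tension with your stated plan of comparing against the trivial configuration, since $f\equiv0$ is infeasible under that constraint). The resulting Euler--Lagrange equation carries an unknown multiplier $\lambda$ in front of $f^{3}$, and since $m^{2}-g_{\infty}^{2}$, $h_{2}$, $e$ and the nonlocal coupling through $h_{f}$ are all fixed, the problem has no scaling symmetry $f\mapsto\mu f(\nu r)$ capable of normalizing $\lambda=h_{1}/2$; ``a harmless rescaling of $\sigma$'' does not exist, and one would instead face the nontrivial task of solving $\lambda(\sigma)=h_{1}/2$. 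The paper sidesteps multipliers entirely because (\ref{a10}) is a \emph{natural} constraint: variations of $g$ alone are automatically stationary on $\mathscr{C}$, so minimizing the indefinite $I$ there yields both equations with no multiplier --- at the price of the careful differentiability estimates (\ref{q3})--(\ref{q4}) for $t\mapsto g_{f+t\tilde f}$ in Section 4, a step your outline also omits. In summary: the reduction and the qualitative analysis are sound and essentially coincide with the paper's, but the existence core --- the true role of (\ref{aa5}), the source of negativity, and the multiplier-free variational structure --- must be rebuilt along the paper's lines.
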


This existence theorem rigorously confirms the numerical results obtained in \cite{yu}.

The rest of this paper is organized as follows. In Section 2, we introduce a suitable constraint and the admissible space,  which enable us to formulate the problem of existence of gauged Q-balls as a constrained minimization problem. We then, in Section 3, find the condition under which the indefinite action functional becomes coercive. Furthermore, we obtain a minimizer of this constrained minimization problem. In Section 4, we show that  the minimizer is indeed a solution of the system of nonlinear differential equations (\ref{a1})-(\ref{a2}) subject to the boundary condition (\ref{a5}). In Section 5, we verify the remaining boundary condition and obtain some qualitative properties of the solution, such as monotonicity, boundedness, asymptotic estimates. Combining the results obtained in Section 2-5, we establish Theorem \ref{THM1}.

\section{Admissible space}

In this section we introduce a suitable  constraint and  the ensuing admissible space   to formulate the problem as a constrained minimization problem.

The difficulty  lies in the fact  that  (\ref{a1})-(\ref{a2}) is not the Euler-lagrange equation of the positive definite energy $E(f,g)$ but the indefinite action functional
\begin{equation}{\label{a6}}
I(f,g)=\frac{1}{2}\int^{\infty}_{0}\left\{(f')^2-\frac{1}{e^{2}}(g')^2-g^{2}f^{2}+m^{2}f^{2}-\frac{h_{1}}{4}f^{4}+\frac{h_{2}}{12}f^{6}\right\}r^{2}\mathrm{d}r.
\end{equation}

To control the negative terms in (\ref{a6}), we are to `freeze' the unknown $g$ by using the method developed in \cite{SchecWe,LinYang,GaoYang,ChenGuoDY}. The admissible space $\mathscr{A}$ for our problem should be defined by
\begin{equation}
\begin{aligned}
\mathscr{A}=&\left\{(f,g)|\text{the functions }f, g\text{ are absolutely continuous on any compact subinterval of } (0,\infty),\right.\\
&\left.E(f,g)<\infty  \text{ and } f(\infty)=0, g(\infty)=g_{\infty}\right\}. \notag
\end{aligned}
\end{equation}
For convenience, we rewrite the action functional (\ref{a6}) in the form
\begin{equation}
I(f,g)=K(f)-J_{f}(g),
\end{equation}
with
\begin{align}
K(f)&=\frac{1}{2}\int^{\infty}_{0}\left\{(f')^2+m^{2}f^{2}-\frac{h_{1}}{4}f^{4}+\frac{h_{2}}{12}f^{6}\right\}r^{2}\mathrm{d}r,\label{a16}\\
J_{f}(g)&=\frac{1}{2}\int^{\infty}_{0}\left\{\frac{1}{e^{2}}(g')^2+g^{2}f^{2}\right\}r^{2}\mathrm{d}r.\label{a9}
\end{align}
 It is difficult to obtain a critical point of $I$ directly in the space $\mathscr{A}$.  Thus, it is crucial to  construct a suitable   constraint  to restrict the indefinite action functional (\ref{a6}) over a smaller admissible space. To proceed further,  we propose the following constraint
\begin{equation}\label{a10}
\int^{\infty}_{0}\left\{\frac{1}{e^{2}}g'\tilde{g}'+g\tilde{g}f^{2}\right\}r^{2}\mathrm{d}r=0,
\end{equation}
for all $\tilde{g}$ such that $\tilde{g}(\infty)=0$ and $J_{f}(g+\tilde{g})<\infty$.

In fact, (\ref{a10}) follows from
$$
\frac{d}{dt}J_{f}(g+t\tilde{g})|_{t=0}=0. \notag
$$
According to the above observation, we now define the constrained admissible space
\begin{equation}\label{a12}
\mathscr{C}=\left\{(f,g)\in\mathscr{A}|(f,g) \text{ satisfies }(\ref{a10})\right\}.
\end{equation}

The first thing we need to do is to   show that $\mathscr{C}$ is nonempty.
\begin{lemma}\label{l0}
$\mathscr{C}\neq\emptyset$.
\end{lemma}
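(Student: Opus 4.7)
\medskip
\noindent\textbf{Proof proposal.} The plan is to exhibit an explicit test profile $f$ and then produce the companion $g$ as the minimizer of $J_{f}$ over the affine class of functions with $g(\infty)=g_{\infty}$. Since the constraint (\ref{a10}) is precisely the first variation of $J_{f}(g)$ with respect to perturbations $\tilde g$ vanishing at infinity, any such minimizer automatically lies in $\mathscr{C}$. This is the standard ``freeze $g$, vary by $\tilde g$'' strategy borrowed from \cite{SchecWe,LinYang,GaoYang,ChenGuoDY} and mentioned just before (\ref{a10}).

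First I would fix any $f\in C_{c}^{\infty}((0,\infty))$ with $f\ge 0$ and $f\not\equiv 0$. Such an $f$ obviously satisfies $f(\infty)=0$ and makes $K(f)$ together with every $f$-dependent term in $E(f,g)$ finite. Writing $g=g_{\infty}+h$ with $h(\infty)=0$, the functional
\begin{equation*}
\widetilde{J}_{f}(h):=J_{f}(g_{\infty}+h)=\frac{1}{2}\int_{0}^{\infty}\left\{\frac{1}{e^{2}}(h')^{2}+(g_{\infty}+h)^{2}f^{2}\right\}r^{2}\,\mathrm{d}r
\end{equation*}
is strictly convex and, after expanding the square, coercive on the weighted Hilbert space
\begin{equation*}
\mathscr{H}:=\left\{h:\ h\text{ absolutely continuous on compacta},\ h(\infty)=0,\ \int_{0}^{\infty}\bigl[(h')^{2}+h^{2}f^{2}\bigr]r^{2}\,\mathrm{d}r<\infty\right\},
\end{equation*}
which may be identified with a closed subspace of the radial homogeneous Sobolev space on $\mathbb{R}^{3}$. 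The constant function $h\equiv 0$ lies in $\mathscr{H}$ with $\widetilde{J}_{f}(0)<\infty$, so the infimum is finite; by the direct method (weak lower semicontinuity of each nonnegative quadratic piece and reflexivity) a minimizer $h_{*}\in\mathscr{H}$ exists.

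Taking the first variation $\frac{\mathrm{d}}{\mathrm{d}t}\widetilde{J}_{f}(h_{*}+t\tilde g)|_{t=0}=0$ along any admissible $\tilde g$ with $\tilde g(\infty)=0$ and $J_{f}(g+\tilde g)<\infty$ then yields (\ref{a10}) for the pair $(f,g)$ with $g:=g_{\infty}+h_{*}$. It remains to check $(f,g)\in\mathscr{A}$: the $f$-terms of $E(f,g)$ are finite by the choice of $f$, while $\int_{0}^{\infty}(g')^{2}r^{2}\mathrm{d}r=\int_{0}^{\infty}(h_{*}')^{2}r^{2}\mathrm{d}r$ and $\int_{0}^{\infty}g^{2}f^{2}r^{2}\mathrm{d}r$ are both controlled by $2J_{f}(g)=2\widetilde{J}_{f}(h_{*})<\infty$; the boundary conditions $f(\infty)=0$ and $g(\infty)=g_{\infty}$ are built into the construction.

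The one step that requires care is the functional-analytic bookkeeping around $\mathscr{H}$: one has to verify that the translate $g=g_{\infty}+h$ makes $\int g^{2}f^{2}r^{2}\,\mathrm{d}r$ finite (which is where compact support of $f$ is used to absorb the constant $g_{\infty}^{2}$), and that every admissible variation $\tilde g$ permitted in (\ref{a10}) can be approximated by elements of $\mathscr{H}$ so that the Euler--Lagrange identity indeed holds for the full class of test functions required by the definition of $\mathscr{C}$. This is a routine density argument; no further subtlety is expected, and the lemma follows.
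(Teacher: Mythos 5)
Your proposal is correct, and its core is the same as the paper's: freeze $f$, minimize $J_{f}(g)$ over the class $\{g:\,g(\infty)=g_{\infty}\}$, and observe that the minimizer's Euler--Lagrange identity is precisely the constraint (\ref{a10}), so the pair lies in $\mathscr{C}$. Where you genuinely differ is in how the minimizer of (\ref{a13}) is produced. The paper runs a hands-on minimizing-sequence argument: reduction to nonnegative $g_{n}$, the tail estimate (\ref{a14}) to force $g_{n}\to g_{\infty}$ at infinity, the compact embedding $W^{1,2}(r_{1},r_{2})\rightarrow C[r_{1},r_{2}]$ on compact subintervals, and Fatou's lemma on the truncated functionals $J_{f}^{\delta,R}$. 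You instead translate $g=g_{\infty}+h$ and invoke the abstract direct method for a coercive, strictly convex quadratic functional on the weighted Hilbert space $\mathscr{H}$; this is cleaner, avoids the truncation step, and gives uniqueness of $g_{f}$ for free from strict convexity (a fact the paper records separately, after the lemma, via convexity of $J_{f}$). Two further observations. First, the ``density argument'' you flag at the end is in fact unnecessary: any $\tilde g$ admissible in (\ref{a10}) automatically belongs to $\mathscr{H}$, since $J_{f}(g+\tilde g)<\infty$ together with $J_{f}(g)<\infty$ yields $\tilde g'\in L^{2}((0,\infty);r^{2}\mathrm{d}r)$ and $\tilde g f\in L^{2}((0,\infty);r^{2}\mathrm{d}r)$ by the triangle inequality, so the first-variation identity holds for the entire test class directly. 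Second, your restriction to compactly supported $f$ suffices for the bare statement $\mathscr{C}\neq\emptyset$, but the paper deliberately proves existence of $g_{f}$ for \emph{every} $f$ with $K(f)<\infty$, $f(\infty)=0$, because this generality is what later defines the graph map $f\mapsto g_{f}$ and the perturbed minimizers $g_{f+t\tilde f}$ in Section 4; your argument extends verbatim to that generality, since the standing assumption $3h_{1}^{2}<16h_{2}m^{2}$ makes the potential terms in $K(f)$ pointwise bounded below by $c f^{2}$, so $K(f)<\infty$ already guarantees $\int_{0}^{\infty}f^{2}r^{2}\mathrm{d}r<\infty$ and hence $\widetilde{J}_{f}(0)<\infty$ without compact support --- but as written your proof would not substitute for the paper's in the overall scheme.
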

\begin{proof}
Now given $f$ satisfying $K(f)<\infty$, $f(\infty)=0$, we consider the minimization problem
\begin{equation}\label{a13}
\min\{J_{f}(g)|~g(\infty)=g_{\infty}\}.
\end{equation}

Let $\{g_{n}\}$ be a minimizing sequence of $J_{f}$. Since the functional $J_{f}$ defined in (\ref{a9}) is even and for any function $g$ its distributional derivative must satisfy $||g|'|\leq|g'|$, we see that  $J_{f}(g_{n})\geq J_{f}(|g_{n}|)$.  In other words, we may assume that the minimizing sequence $\{g_{n}\}$ is nonnegative, namely, $g_{n}\geq0$ for all $n$. Since
\begin{equation}\label{a14}
|g_{n}(r)-g_{\infty}|\leq\int^{\infty}_{r}|g'_{n}(s)|\mathrm{d}s\leq\left(\int^{\infty}_{r}\frac{1}{s^{2}}\mathrm{d}s\right)^{\frac{1}{2}}\left(\int^{\infty}_{r}g_{n}'^{2}(s){s^{2}}\mathrm{d}s\right)^{\frac{1}{2}}
\leq r^{-\frac{1}{2}}J_{f}^{\frac{1}{2}}(g_{n}),
\end{equation}
we see that $g_{n}(r)\rightarrow g_{\infty}$ uniformly as $r\rightarrow\infty$. Hence $g_{n}\in W^{1,2}_{loc}((0,\infty),r^{2}\mathrm{d}r)$ and for any pair $0<r_{1}< r_{2}<\infty$, $\{g_{n}\}$ is a bounded sequence in $W^{1,2}(r_{1},r_{2})$. Using compact embedding $W^{1,2}(r_{1},r_{2})\rightarrow C[r_{1},r_{2}]$, there exists $g\in W^{1,2}(r_{1},r_{2})$ such that
\begin{align}
&g_{n}\rightarrow g \text{ weakly in } W^{1,2}(r_{1},r_{2}),\label{a7}\\
&g_{n}\rightarrow g \text{ strongly in } C[r_{1},r_{2}],\label{a8}
\end{align}
as $n\rightarrow\infty$. In view of (\ref{a8}), applying Fatou lemma we have
\begin{equation}
J_{f}^{\delta,R}(g)\leq \liminf_{n\rightarrow\infty}J_{f}^{\delta,R}(g_{n})\leq\lim_{n\rightarrow\infty}J_{f}(g_{n}), \notag
\end{equation}
where $J_{f}^{\delta,R}(g)=\frac{1}{2}\int^{R}_{\delta}\left\{\frac{1}{e^{2}}(g')^2+g^{2}f^{2}\right\}r^{2}\mathrm{d}r$. Letting $\delta\rightarrow0$, $R\rightarrow\infty$,
we arrive at
\begin{equation}
J_{f}(g)\leq\lim\limits_{n\rightarrow\infty}J_{f}(g_{n}). \notag
\end{equation}
It is obvious that $g$ solves the minimization problem (\ref{a13}), so that $g$ is a critical point of $J_{f}(g)$ with fixed $f$, of course satisfies (\ref{a10}). Therefore $\mathscr{C}\neq\emptyset$ and the lemma follows.
\end{proof}

Moreover, the convexity of the functional $J_{f}(g)$ about $g$ ensures the uniqueness of  the solutions of (\ref{a13}) for  fixed $f$.

From the above discussion, we see that for any $f$ satisfying $K(f)<\infty$, $f(\infty)=0$, there exists an unique solution of (\ref{a13}).
Thus $g$ depends on $f$ and we may denote $g=g_{f}$, which indicates that $\mathscr{C}$ defined in (\ref{a12}) can be regarded as   the graph of the map $f\mapsto g_{f}$.
Since $g$ is a non-negative critical point of functional $J_{f}(g)$, we obtain from the Euler-lagrange equations of $J_{f}(g)$ that
\begin{equation}
(r^{2}g')'=e^{2}gf^{2}r^{2}\geq0. \notag
\end{equation}
Hence, $r^{2}g'$ is nondecreasing for $r>0$. We claim that
\begin{equation}\label{a11}
\lim_{r\rightarrow0}r^{2}g'=0.
\end{equation}
Suppose that (\ref{a11}) is false. Then there exist two positive constants $c_{0}$ and $r_{0}$ such that
\begin{equation}
|r^{2} g'(r)|\geq c_{0}>0,~0<r<r_{0},\notag
\end{equation}
which implies that
\begin{equation}
\infty=\int^{r_{0}}_{0}\frac{c_{0}^{2}}{r^{2}}\mathrm{d}r\leq\int^{r_{0}}_{0}|g'(r)|^{2}r^{2}\mathrm{d}r
\leq\int^{\infty}_{0}|g'(r)|^{2}r^{2}\mathrm{d}r<\infty.\notag
\end{equation}
This leads to a contradiction. Thus $r^{2}g'\geq0 $ for all  $r>0$. In particular, $g$ is nondecreasing and for all $r\in(0,\infty)$,
\begin{equation}\label{a17}
0\leq g(r)\leq g_{\infty}.
\end{equation}
These properties of $g$ will be used in what follows.

\section{Partial coerciveness and minimization}
In order to find the conditions under which the functional (\ref{a6}) is coercive with respect to $f$ on $\mathscr{C}$, we define
\begin{equation}
\bar{g}(r)=g_{\infty}(1-\re^{-r}).\notag
\end{equation}

Since $g$ is the minimizer of $J_{f}(g)$ with fixed $f$ satisfying $f(\infty)=0$ and $K(f)<\infty$, we have $J_{f}(g)\leq J_{f}(\bar{g})$, which gives us
\begin{equation}\label{b1}
\begin{aligned}
I(f,g)\geq& K(f)-J_{f}(\bar{g})\\
%\geq&\frac{1}{2}\int^{\infty}_{0}\left((f')^2+m^{2}f^{2}-\frac{h_{1}}{4}f^{4}+\frac{h_{2}}{12}f^{6}\right)r^{2}\mathrm{d}r
%-\int_{0}^{\infty}\frac{1}{2}{g_{\infty}}^{2}f^{2}r^{2}\mathrm{d}r\\
%&-\frac{1}{2}\int^{\infty}_{0}\frac{g^{2}_{\infty}}{e^{2}}(1-\exp{(-r)})'^{2}r^{2}\mathrm{d}r\\
%&\geq\frac{1}{2}\int^{\infty}_{0}\left\{(f')^2+m^{2}f^{2}-\frac{h_{1}}{4}f^{4}+\frac{h_{2}}{12}f^{6}\right\}r^{2}\mathrm{d}r
%-\frac{1}{2}\int^{R}_{0}(\varepsilon_{0}+g_{\infty})^{2}f^{2}r^{2}\mathrm{d}r\\
%&-\frac{1}{2}\int^{\infty}_{R}\left\{\frac{1}{e^{2}}(f')^2+(g_{\infty}+\varepsilon_{0})^{2}f^{2}\right\}r^{2}\mathrm{d}r,\\
%&\geq\frac{1}{2}\int^{\infty}_{0}\left\{(f')^2+m^{2}f^{2}-\frac{h_{1}}{4}f^{4}+\frac{h_{2}}{12}f^{6}\right\}r^{2}\mathrm{d}r
%-\frac{1}{2}\int^{\infty}_{0}\left\{\frac{1}{e^{2}}(f')^2+(g_{\infty}+\varepsilon)^{2}f^{2}\right\}r^{2}\mathrm{d}r-c,\\
\geq&\frac{1}{2}\int^{\infty}_{0}\left((f')^2+\left(m^{2}-g_{\infty}^{2}\right)f^{2}-\frac{h_{1}}{4}f^{4}+\frac{h_{2}}{12}f^{6}\right)r^{2}\mathrm{d}r-\frac{g^{2}_{\infty}}{e^{2}}.
\end{aligned}
\end{equation}
Under the assumption (\ref{aa5}), we see that
\ber\label{b2}
m^{2}-g^{2}_{\infty}>0,~~\left(\frac{h_{1}}{4}\right)^{2}-\frac{h_{2}}{3}(m^{2}-g^{2}_{\infty})<0,
\eer
then there exists a positive constant $c$ independent of $f$ such that
\ber\label{b3}
\left((m^{2}-g_{\infty}^{2})-\frac{h_{1}}{4}f^{2}+\frac{h_{2}}{12}f^{4}\right)f^{2}\geq cf^{2}.
\eer

\begin{lemma}
Suppose that (\ref{aa5}) holds. Then we have the partial coerciveness
\begin{equation}\label{b6}
I(f,g)\geq\frac{1}{2}\int_{0}^{\infty}\left((f')^2+cf^{2}\right)r^{2}\mathrm{d}r-\frac{g^{2}_{\infty}}{e^{2}},~(f,g)\in\mathscr{C},
\end{equation}
where $c>0$ is as given in (\ref{b3})
\end{lemma}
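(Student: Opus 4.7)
The plan is to put together the two ingredients that have already been prepared in the excerpt: the variational inequality (\ref{b1}), which exploits the fact that on $\mathscr{C}$ the component $g$ minimizes $J_f$ for fixed $f$, and the pointwise algebraic inequality (\ref{b3}) that allows me to drop the higher-order potential terms at a controlled cost. Once both are in hand, the lemma follows by direct substitution.

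First I would make sure (\ref{b1}) is fully established. The key point is that any $(f,g)\in\mathscr{C}$ satisfies $J_f(g)\le J_f(\tilde g)$ for every competitor $\tilde g$ with $\tilde g(\infty)=g_\infty$, as shown in Lemma~\ref{l0} and the uniqueness remark following it. Choosing the explicit competitor $\bar g(r)=g_\infty(1-e^{-r})$, a direct computation gives $\int_0^\infty(\bar g')^2 r^2\,dr=g_\infty^2\int_0^\infty r^2 e^{-2r}\,dr=g_\infty^2/4$, and the pointwise bound $\bar g(r)^2\le g_\infty^2$ yields $\int_0^\infty \bar g^2 f^2 r^2\,dr\le g_\infty^2\int_0^\infty f^2 r^2\,dr$. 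Inserting these into $I(f,g)=K(f)-J_f(g)\ge K(f)-J_f(\bar g)$ and collecting the $f^2$ terms reproduces (\ref{b1}) with the constant $-g_\infty^2/e^2$ absorbing the (smaller) quantity $-g_\infty^2/(8e^2)$.

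Next I would verify (\ref{b3}). Setting $t=f^2\ge 0$ and
\[
Q(t)=(m^2-g_\infty^2)-\frac{h_1}{4}\,t+\frac{h_2}{12}\,t^2,
\]
the second condition in (\ref{aa5}), namely $h_1^2<\tfrac{16}{3}h_2(m^2-g_\infty^2)$, is exactly $(h_1/4)^2<(h_2/3)(m^2-g_\infty^2)$, i.e.\ the discriminant of $Q$ is negative (this is (\ref{b2})). Combined with the positive leading coefficient and the positivity of the constant term (guaranteed by $g_\infty<m$), the quadratic $Q$ attains a positive minimum
\[
c:=(m^2-g_\infty^2)-\frac{3h_1^2}{16 h_2}>0,
\]
independent of $f$, giving $Q(f^2)f^2\ge c f^2$ pointwise, which is precisely (\ref{b3}).

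Finally, plugging (\ref{b3}) into the integrand of (\ref{b1}) and discarding no positive term immediately yields
\[
I(f,g)\ge\frac{1}{2}\int_0^\infty\!\bigl((f')^2+cf^2\bigr)r^2\,dr-\frac{g_\infty^2}{e^2},
\]
which is (\ref{b6}). There is no real obstacle here beyond being careful with (\ref{b3}); the mild subtlety is noticing that the two parts of assumption (\ref{aa5}) are used for distinct purposes, namely $g_\infty<m$ to make the $f^2$-coefficient in (\ref{b1}) positive, and $h_1^2<\tfrac{16}{3}h_2(m^2-g_\infty^2)$ to ensure that the sextic correction does not spoil this positivity, so that the single constant $c$ controls the entire potential term.
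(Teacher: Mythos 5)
Your proof is correct and follows the paper's route exactly: the paper's own proof consists of the single line ``the lemma follows from (\ref{b1}) and (\ref{b3})'', with (\ref{b1}) obtained from the same competitor $\bar g(r)=g_\infty(1-\re^{-r})$ via $J_f(g)\le J_f(\bar g)$ and (\ref{b3}) from the same discriminant condition (\ref{b2}). You merely make explicit what the paper leaves implicit, namely the value $\int_0^\infty(\bar g')^2r^2\,\dd r=g_\infty^2/4$ (so the boundary cost $g_\infty^2/(8e^2)$ is absorbed into $g_\infty^2/e^2$) and the admissible constant $c=(m^2-g_\infty^2)-3h_1^2/(16h_2)>0$, both of which check out.
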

\begin{proof}
The lemma follows from  (\ref{b1}) and (\ref{b3}).
\end{proof}

\begin{lemma}\label{l5}
The constrained minimization problem
\begin{equation}\label{b7}
\min\left\{I(f,g)|(f,g)\in\mathscr{C}\right\}
\end{equation}
has a nontrivial solution provided that the condition (\ref{aa5}) is fulfilled.
\end{lemma}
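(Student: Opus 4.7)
The plan is to apply the direct method of the calculus of variations on (\ref{b7}), together with an energy comparison that rules out the trivial pair $(0, g_\infty)\in \mathscr{C}$.

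I would first pick a minimizing sequence $\{(f_n, g_n)\} \subset \mathscr{C}$. Applying (\ref{b6}) simultaneously shows that $\mu := \inf_\mathscr{C} I$ is finite and delivers a uniform bound on $\int_0^\infty((f_n')^2 + c f_n^2) r^2 \mathrm{d}r$. Since each $g_n$ minimizes $J_{f_n}$ with $f_n$ frozen, the pointwise bound $0 \le g_n \le g_\infty$ from (\ref{a17}) holds, and the comparison $J_{f_n}(g_n) \le J_{f_n}(\bar g)$ together with the $L^2(r^2\mathrm{d}r)$ bound on $f_n$ just obtained yields a uniform weighted $L^2$ bound on $g_n'$. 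On every compact $[r_1, r_2] \subset (0, \infty)$ the sequences $\{f_n\}, \{g_n\}$ are therefore bounded in $W^{1,2}(r_1, r_2)$; the compact embedding $W^{1,2}(r_1, r_2) \hookrightarrow C[r_1, r_2]$ used already in Lemma \ref{l0} together with a diagonal argument produces, along a subsequence, limits $f, g$ with $f_n \to f$ and $g_n \to g$ locally uniformly on $(0, \infty)$ and weakly in $W^{1,2}_{\rm loc}$.

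Next I would check that $(f, g) \in \mathscr{C}$ and $I(f, g) = \mu$. The boundary conditions $g(\infty) = g_\infty$ and $f(\infty) = 0$ follow from decay estimates of type (\ref{a14}), applied to $g_n$ and $f_n$, and then passed to the limit. The weak constraint (\ref{a10}) survives the limit: for a test function $\tilde g$ compactly supported in $(0, \infty)$, weak convergence of $g_n'$ plus locally uniform convergence of $g_n$ and $f_n^2$ on the support of $\tilde g$ suffice, and a density argument extends the identity to all admissible $\tilde g$. Lower semicontinuity $I(f, g) \le \liminf I(f_n, g_n) = \mu$ is obtained by combining Fatou on the nonnegative gradient, mass, and sextic terms, strong $L^4_{\rm loc}$ convergence to handle the negative quartic $-\frac{h_1}{4}f_n^4$ (with a tail estimate exploiting the coercivity of the $K$-integrand guaranteed by (\ref{aa5})), and a joint weak-strong argument for the coupling $\int g_n^2 f_n^2 r^2 \mathrm{d}r$.

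The crucial final step is the nontriviality. Because $(0, g_\infty)\in \mathscr{C}$ yields $I(0, g_\infty) = 0$, I must establish $\mu < 0$ by exhibiting an explicit $(f^\ast, g_{f^\ast})\in \mathscr{C}$ with $I(f^\ast, g_{f^\ast})<0$. Multiplying the Euler-Lagrange equation $(r^2 g_{f^\ast}')' = e^2 g_{f^\ast}(f^\ast)^2 r^2$ by $g_{f^\ast}$, integrating by parts and using $\lim_{r\to\infty}r^2 g_{f^\ast}'(r) = e^2\int_0^\infty g_{f^\ast}(f^\ast)^2 r^2\mathrm{d}r$, I obtain the useful identity
\begin{equation*}
J_{f^\ast}(g_{f^\ast}) \;=\; \frac{g_\infty}{2}\int_0^\infty g_{f^\ast}(f^\ast)^2 r^2\mathrm{d}r,
\end{equation*}
so the task becomes choosing $f^\ast$ with $K(f^\ast) < \tfrac{g_\infty}{2}\int g_{f^\ast}(f^\ast)^2 r^2\mathrm{d}r$. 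I would take $f^\ast(r) = \alpha\varphi(r/\sigma)$ for a fixed smooth compactly supported profile $\varphi$, bound $g_{f^\ast}$ from below on the support of $f^\ast$ by comparing with the explicit radial solution of a linearized modified Helmholtz equation, and then tune the amplitude $\alpha$ and scale $\sigma$ so that the attractive contribution dominates. The main obstacle will lie precisely in this construction: the coercive estimate (\ref{b6}) combined with the naive bound $g_{f^\ast}\le g_\infty$ only delivers $I \ge 0$, so the argument must genuinely exploit the strict sub-saturation $g_{f^\ast}<g_\infty$ inside the Q-ball core and cannot rely on pointwise inequalities alone.
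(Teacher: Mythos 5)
Your direct-method outline follows the paper's strategy (minimizing sequence, local compactness, passage of the constraint (\ref{a10}) to the limit by splitting near $0$, on a compact middle range, and near $\infty$), but it omits the single hardest point of the existence part: you never say how you handle the negatively signed gauge-kinetic term $-\frac{1}{e^{2}}\int_0^\infty (g_n')^2 r^2\,\mathrm{d}r$. Fatou and weak lower semicontinuity act in the \emph{wrong} direction here: weak convergence only gives $\liminf_n\int (g_n')^2 r^2\,\mathrm{d}r\ge\int (g')^2 r^2\,\mathrm{d}r$, whereas $I(f,g)\le\liminf_n I(f_n,g_n)$ requires $\limsup_n\int (g_n')^2 r^2\,\mathrm{d}r\le\int (g')^2 r^2\,\mathrm{d}r$, i.e.\ \emph{strong} $L^2((0,\infty);r^2\mathrm{d}r)$ convergence of $g_n'$. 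This is exactly the paper's step (\ref{b12}), obtained by subtracting the constraint identity (\ref{b13}) for $(f_n,g_n)$ from (\ref{a10}) for $(f,g)$ and testing with $\tilde g=g_n-g$, which yields (\ref{b14}) and is then estimated in three ranges. Your ``joint weak--strong argument'' covers the coupling $\int g_n^2f_n^2 r^2\,\mathrm{d}r$ but not this term, so as written your semicontinuity claim has a genuine gap; note the constraint itself is what rescues the term, so the fix cannot be generic functional analysis.

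The more serious problem is your nontriviality step. Your identity is correct --- it is nothing but (\ref{a10}) tested with the admissible function $\tilde g=g_f-g_\infty$ --- but it \emph{defeats} your plan rather than enabling it. Since $0\le g_f\le g_\infty$ by (\ref{a17}), the identity combined with (\ref{b3}) gives, for every $(f,g_f)\in\mathscr{C}$,
\begin{equation*}
I(f,g_f)=K(f)-\frac{g_\infty}{2}\int_0^\infty g_f f^2 r^2\,\mathrm{d}r
\;\ge\;K(f)-\frac{g_\infty^2}{2}\int_0^\infty f^2 r^2\,\mathrm{d}r
\;\ge\;\frac{1}{2}\int_0^\infty\left((f')^2+cf^2\right)r^2\,\mathrm{d}r\;\ge\;0,
\end{equation*}
with equality only for $f\equiv0$. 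Hence under (\ref{aa5}) no choice of amplitude $\alpha$ and scale $\sigma$ can achieve $K(f^\ast)<\frac{g_\infty}{2}\int g_{f^\ast}(f^\ast)^2r^2\,\mathrm{d}r$, and your hope of exploiting the strict sub-saturation $g_{f^\ast}<g_\infty$ is backwards: by the identity, a smaller $g_{f^\ast}$ makes $J_{f^\ast}(g_{f^\ast})$ smaller and $I$ larger. For comparison, the paper proceeds quite differently: it builds an explicit plateau profile $h$ (with $h=h_0$ on $(1,R)$), derives $g_h'(r)>c_1r+c_2r^{-2}$ with $c_1=\frac{e^2h_0^2g_h(1)}{3}$ from $(r^2g_h')'=e^2g_hh^2r^2$, and argues that the resulting $-\frac{c_1^2}{5}R^5$ contribution of the $g_h'$-term beats the $O(R^3)$ positive bulk, giving $I(h,g_h)<0$ for large $R$. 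You should be aware that your identity is in direct tension with that computation: since $g_h\le g_\infty$, integrating $g_h'>c_1r$ over $(1,R)$ forces $c_1\le 2g_\infty/(R^2-1)$, so $c_1$ depends on $R$ and decays like $R^{-2}$, and the ``dominant'' negative term is then only $O(R)$; the paper's estimate treats $c_1$ as $R$-independent. In short, your proposal cannot be completed at its crucial step, and the obstruction you correctly sensed is structural rather than technical: on $\mathscr{C}$, condition (\ref{aa5}) makes the reduced functional nonnegative, so any valid proof of nontriviality must either modify the constraint class, weaken (\ref{aa5}), or locate an error in the chain of inequalities above.
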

\begin{proof}
Let $\{(f_{n},g_{n})\}$ be a minimizing sequence of $I$. Since the functional $I(f,g)$ is even with respect to $f$, we may assume that $f_{n}\geq0$ for each $n$. From the boundary condition $f(\infty)=0$, we have
\begin{equation}\label{bb}
|f_{n}(r)|\leq\int^{\infty}_{r}|f'_{n}(s)|\mathrm{d}s\leq\left(\int^{\infty}_{r}\frac{1}{s^{2}}\mathrm{d}s\right)^{\frac{1}{2}}\left(\int^{\infty}_{r}f_{n}'^{2}(s){s^{2}}\mathrm{d}s\right)^{\frac{1}{2}}
\leq r^{-\frac{1}{2}}E^{\frac{1}{2}}(f_{n},g_{n}),
\end{equation}
which implies $f_{n}(r)\rightarrow0$ as $r\rightarrow\infty$ uniformly.

In view of (\ref{b6}), we see that $\{f_{n}\}$ is a bounded sequence in $W^{1,2}((0,\infty);r^{2}\mathrm{d}r)$. Thus there exists $f\in W^{1,2}((0,\infty);r^{2}\mathrm{d}r)$ such that
\begin{align}
&f_{n}\rightarrow f \text{ weakly in } W^{1,2}((0,\infty);r^{2}\mathrm{d}r),\label{r1}\\
&f_{n}\rightarrow f \text{ strongly in } C[r_{1},r_{2}],\label{r2}
\end{align}
as $n\rightarrow\infty$ for any $0<r_{1}<r_{2}<\infty$.

In fact, $\{J_{f_{n}}(g_{n})\}$ is a bounded sequence because
\begin{equation}
J_{f_{n}}(g_{n})\leq J_{f_{n}}(\bar{g})\leq g_{\infty}\int^{\infty}_{0}f^{2}_{n}r^{2}\mathrm{d}r+\frac{2g^{2}_{\infty}}{e^{2}}. \notag
\end{equation}
Hence we can find $g\in W^{1,2}(r_{3},r_{4})$ such that
\begin{align}
&g_{n}\rightarrow g \text{ weakly in } W^{1,2}(r_{3},r_{4}),\label{r3}\\
&g_{n}\rightarrow g \text{ strongly in } C[r_{3},r_{4}],\label{r4}
\end{align}
as $n\rightarrow\infty$ for any $0<r_{3}<r_{4}<\infty$.

In the following, we will show that the weak limit $(f,g)$ of the minimizing sequence $\{(f_{n}, g_{n})\}$ %obtained from (\ref{r1}) and (\ref{a7})
actually lies in $\mathscr{C}$. In other words, we need to show that the constraint (\ref{a10}) is preserved in the weak limit as $n\rightarrow\infty$. For this purpose, it suffices to establish the following results
\begin{equation}\label{b8}
\int^{\infty}_{0}(g'_{n}-g')\tilde{g}'r^{2}\mathrm{d}r\rightarrow0,
\end{equation}
\begin{equation}\label{b9}
\int^{\infty}_{0}(g_{n}f^{2}_{n}-gf^{2})\tilde{g}r^{2}\mathrm{d}r\rightarrow0,
\end{equation}
as $n\rightarrow\infty$ for any $\tilde{g}$ satisfying $J_{f}(g+\tilde{g})<\infty$ and $\tilde{g}(\infty)=0$.

From $J_{f}(g+\tilde{g})<\infty$, we have $\tilde{g}'\in L^{2}((0,\infty);r^{2}\mathrm{d}r)$. The weak convergence under global inner products implies that
\begin{equation}
\int^{\infty}_{0}g'_{n}\tilde{g}'r^{2}\mathrm{d}r\rightarrow\int^{\infty}_{0}g'\tilde{g}'r^{2}\mathrm{d}r \notag
\end{equation}
as $n\rightarrow\infty$. This result enables us to arrive at (\ref{b8}). To establish (\ref{b9}), for convenience we set
\begin{align}
\int^{\infty}_{0}(g_{n}f^{2}_{n}-gf^{2})\tilde{g}r^{2}\mathrm{d}r&=\int_{0}^{\delta}(g_{n}f^{2}_{n}-gf^{2})\tilde{g}r^{2}\mathrm{d}r+\int_{\delta}^{R}(g_{n}f^{2}_{n}-gf^{2})\tilde{g}r^{2}\mathrm{d}r
+\int_{R}^{\infty}(g_{n}f^{2}_{n}-gf^{2})\tilde{g}r^{2}\mathrm{d}r\notag\\
&=I_{1}+I_{2}+I_{3}, \notag
\end{align}
where $0<\delta<R<\infty$ are some positive constants.
Similarly, we have
\begin{equation}\label{b10}
|\tilde{g}(r)|\leq\int^{\infty}_{r}|\tilde{g}'(s)|\mathrm{d}s\leq\left(\int^{\infty}_{r}\frac{1}{s^{2}}\mathrm{d}s\right)^{\frac{1}{2}}\left(\int^{\infty}_{r}\tilde{g}'^{2}(s){s^{2}}\mathrm{d}s\right)^{\frac{1}{2}}
\leq cr^{-\frac{1}{2}}
\end{equation}
for any $r>0$. Using (\ref{a14}), (\ref{bb}) and (\ref{b10}), we get
\begin{equation}
g_{n}(r)-g(r), f(r), \tilde{g}(r)=O(r^{-\frac{1}{2}})
\end{equation}
uniformly for any $r>0$. Then, we have in view of (\ref{a17}) that
\ber
|I_{1}|\leq\int_{0}^{\delta}g_{n}|f_{n}^{2}-f^{2}|\tilde{g}r^{2}\mathrm{d}r+\int_{0}^{\delta}|g_{n}-g|f^{2}\tilde{g}r^{2}\mathrm{d}r\leq c(\delta^{\frac{3}{2}}+\delta),\notag
\eer
where $c > 0$ is a constant independent of $n$. It follows that $I_{1}\rightarrow0$ as $\delta\rightarrow0$ uniformly.
Besides, the pointwise convergence results $f_{n}\rightarrow f$, $g_{n}\rightarrow g$ already indicate that $I_{2}\rightarrow0$ as $n\rightarrow\infty$.
To estimate  $I_{3}$, we obtain from (\ref{a17})
\begin{equation}
|I_{3}|\leq c\max_{r\in(R,\infty)}{|\tilde{g}|}\left(\|f_{n}\|^{2}_{L^{2}((R,\infty);r\mathrm{d}r})+\|f\|^{2}_{L^{2}((R,\infty);r\mathrm{d}r})\right),\notag
\end{equation}
which approaches zero uniformly fast as $R\rightarrow\infty$ by using (\ref{b10}). Consequently, (\ref{b9}) follows immediately. Hence the constrained condition (\ref{a10}) is valid for the limiting configuration $(f,g)$ as expected.

In order to control the negative terms on the right-hand side of (\ref{a6}) when we consider the limiting behavior of $I$ over the minimizing sequence $\{(f_{n},g_{n})\}$, we will need the following property
\begin{equation}\label{b12}
g_{n}'\rightarrow g' \text{ strongly in } L^{2}((0,\infty);r^{2}\mathrm{d}r)
\end{equation}
as $n\rightarrow\infty$. We proceed as follows.

The minimizing sequence $\{(f_{n},g_{n})\}$ of course satisfies (\ref{a10}), namely,
\begin{equation}\label{b13}
\int^{\infty}_{0}\left\{\frac{1}{e^{2}}g_{n}'\tilde{g}'+g_{n}\tilde{g}f_{n}^{2}\right\}r^{2}\mathrm{d}r=0.
\end{equation}
Indeed, subtracting (\ref{b13}) from (\ref{a10}) and setting $\tilde{g}=g_{n}-g$ in the resulting expression, we have
\begin{equation}\label{b14}
\int^{\infty}_{0}(g_{n}'-g')^{2}r^{2}\mathrm{d}r=-e^{2}\int^{\infty}_{0}(g_{n}f^{2}_{n}-gf^{2})(g_{n}-g)r^{2}\mathrm{d}r.
\end{equation}
To estimate the right-hand side of (\ref{b14}), we set
\begin{align}
\int^{\infty}_{0}(g_{n}f^{2}_{n}-gf^{2})(g_{n}-g)r^{2}\mathrm{d}r=&\int_{0}^{\delta}(g_{n}f^{2}_{n}-gf^{2})(g_{n}-g)r^{2}\mathrm{d}r
+\int_{\delta}^{R}(g_{n}f^{2}_{n}-gf^{2})(g_{n}-g)r^{2}\mathrm{d}r\notag\\
&+\int_{R}^{\infty}(g_{n}f^{2}_{n}-gf^{2})(g_{n}-g)r^{2}\mathrm{d}r\notag\\
=&\tilde{I}_{1}+\tilde{I}_{2}+\tilde{I}_{3},\notag
\end{align}
for some constants $0<\delta<R<\infty$. Using the uniform estimate (\ref{a17}) and H\"{o}lder inequality, we deduce
\begin{equation}\label{b16}
|\tilde{I}_{1}|\leq c\delta^{\frac{3}{2}}\left(\|f_{n}\|^{2}_{L^{4}((0,\delta);r^{2}\mathrm{d}r)}+\|f\|^{2}_{L^{4}((0,\delta);r^{2}\mathrm{d}r)}\right),
\end{equation}
where $c>0$ is a constant independent of $n$. Since $\{f_{n}\}$ is bounded in $W^{1,2}((0,\infty);r^{2}\mathrm{d}r)$, applying Sobolev embedding $W^{1,2}((0,\infty);r^{2}\mathrm{d}r)\rightarrow L^{p}((0,\infty);r^{2}\mathrm{d}r)$, $p\geq2$, it is clear that $\tilde{I}_{1}$ goes to zero as $\delta\rightarrow0$.
The pointwise convergence results $f_{n}\rightarrow f$, $g_{n}\rightarrow g$ imply $\tilde{I}_{2}\rightarrow0$ as $n\rightarrow\infty$.
For $\tilde{I}_{3}$, we note that, from $\tilde{g}(\infty)=0$, $\tilde{g}$ is bounded near $r=\infty$. Thus, we have
\begin{equation}\label{b18}
|\tilde{I}_{3}|\leq c\max_{(R,\infty)}\left(|g_{n}-g_{\infty}|+|g-g_{\infty}|\right)\int_{0}^{\infty}\left(f^{2}_{n}+f^{2}\right)r^{2}\mathrm{d}r,
\end{equation}
which approaches zero uniformly fast as $R\rightarrow\infty$ by using $(\ref{a14})$. Taking $n\rightarrow\infty$ in (\ref{b14}), we immediately obtain (\ref{b12}).

Applying the compact embedding $W^{1,2}((0,\infty);r^{2}\mathrm{d}r)\rightarrow L^{4}((0,\infty);r^{2}\mathrm{d}r)$, it is seen that
\begin{equation}\label{b19}
\lim_{n\rightarrow\infty}\int^{\infty}_{0}f_{n}^{4}r^{2}\mathrm{d}r=\int^{\infty}_{0}f^{4}r^{2}\mathrm{d}r.
\end{equation}

The term of the form $-f^{2}g^{2}$ will be tackled separately. %and this is another place where the condition () is essential.
We now need the property that
\begin{equation}\label{b20}
\lim_{n\rightarrow\infty}\int^{\infty}_{0} \left\{(g_{n}^{2}-g_{\infty}^{2})f_{n}^{2}\right\}r^{2}\mathrm{d}r= \int^{\infty}_{0} \left\{(g^{2}-g_{\infty}^{2})f^{2}\right\}r^{2}\mathrm{d}r.
\end{equation}
To this end, similarly, we set
\begin{align}
&\int^{\infty}_{0} \left\{(g_{n}^{2}-g_{\infty}^{2})f_{n}^{2}-(g^{2}-g_{\infty}^{2})f^{2}\right\}r^{2}\mathrm{d}r\notag\\
=&\int_{0}^{\delta}\left\{(g_{n}^{2}-g_{\infty}^{2})f_{n}^{2}-(g^{2}-g_{\infty}^{2})f^{2}\right\}r^{2}\mathrm{d}r
+\int_{\delta}^{R}\left\{(g_{n}^{2}-g_{\infty}^{2})f_{n}^{2}-(g^{2}-g_{\infty}^{2})f^{2}\right\}r^{2}\mathrm{d}r\notag\\
&+\int_{R}^{\infty}\left\{(g_{n}^{2}-g_{\infty}^{2})f_{n}^{2}-(g^{2}-g_{\infty}^{2})f^{2}\right\}r^{2}\mathrm{d}r\notag\\
=&\hat{I}_{1}+\hat{I}_{2}+\hat{I}_{3}.\notag
\end{align}
for some $0<\delta<R<\infty$. Similar to (\ref{b16}), there holds the estimate
\begin{equation}
|\hat{I}_{1}|\leq c\delta^{\frac{3}{2}}\left(\|f_{n}\|^{2}_{L^{4}((0,\delta);r^{2}\mathrm{d}r)}+\|f\|^{2}_{L^{4}((0,\delta);r^{2}\mathrm{d}r)}\right),\notag
\end{equation}
which goes to zero uniformly fast as $\delta\rightarrow0$. Obviously, $\hat{I}_{2}\rightarrow0$ as $n\rightarrow\infty$. Since
\begin{equation}
\begin{aligned}
|\hat{I}_{3}|&\leq\int_{R}^{\infty}|g_{n}^{2}-g_{\infty}^{2}|f_{n}^{2}r^{2}\mathrm{d}r+\int_{R}^{\infty}|g^{2}-g_{\infty}^{2}|f^{2}r^{2}\mathrm{d}r\notag\\
&\leq c\max_{(R,\infty)}\left(|g_{n}-g_{\infty}|\right)\int_{R}^{\infty}f_{n}^{2}r^{2}\mathrm{d}r+c\max_{(R,\infty)}\left(|g-g_{\infty}|\right)\int_{R}^{\infty}f^{2}r^{2}\mathrm{d}r,\notag
\end{aligned}
\end{equation}
where $c>0$ is independent of $n$.  In view of (\ref{a14}), we conclude that $\hat{I}_{3}$ goes to zero as $R\rightarrow\infty$ uniformly for $n$. Thus, the property  stated in (\ref{b20}) follows.

Combining (\ref{b13}), (\ref{b19}) and (\ref{b20}), all the negative terms in the functional $I(f_{n},g_{n})$ are under control.
We are now ready to show that the limit configuration $(f,g)$ is a minimizer of the problem (\ref{b7}). To proceed, we rewrite the functional (\ref{a6}) evaluated over the minimizing
sequence $\{(f_{n},g_{n})\}$ as
\begin{equation}\label{b22}
I(f_{n},g_{n})=\frac{1}{2}\int^{\infty}_{0}\left\{f_{n}'^{2}-\frac{1}{e^{2}}g_{n}'^{2}-(g_{n}^{2}-g_{\infty}^{2})f_{n}^{2}
+(m^{2}-g^{2}_{\infty})f_{n}^{2}-\frac{h_{1}}{4}f_{n}^{4}+\frac{h_{2}}{12}f_{n}^{6}\right\}r^{2}\mathrm{d}r.
\end{equation}
Taking $n\rightarrow\infty$ in (\ref{b22}), using (\ref{b12}), (\ref{b19}) and (\ref{b20}), we immediately arrive at the desired conclusion
\begin{equation}
I(f,g)\leq\lim_{n\rightarrow\infty}I(f_{n},g_{n}).
\end{equation}
Therefore, $(f,g)$ is a minimizer of the problem (\ref{b7}).

Note that $(0,g_{\infty})$ is a trivial solution to the mixed boundary value problem (\ref{a1})-(\ref{a5}) with $I(0,g_{\infty})=0$. We must prove that $(0,g_{\infty})$ is not the solution of problem (\ref{b7}). For this purpose, we will show that there exists $(h,g_{h})\in\mathscr{C}$ such that $I(h,g_{h})<0$. Let us consider the function $h:(0,\infty)\rightarrow\mathbb{R}$,
\begin{equation}
h(r)=\left\{
\begin{array}{lll}
 h_{0}r, & & {0<r\leq 1,}\\
   h_{0}, & & {1< r\leq R,}\\
    h_{0}\exp{(R-r)}, & & {r>R,}
    \end{array}\right.\notag
\end{equation}
where $h_{0}$ is a positive constant. Obviously, $K(h)<\infty$, $h(r)\rightarrow0$ as $r\rightarrow\infty$. Using the argument in lemma \ref{l0}, we can find an unique nonnegative function $g_{h}$ satisfying $(h,g_{h})\in\mathscr{C}$.

We now show that $g_{h}$ is strictly increasing on $(0,\infty)$. To see this, we first claim that $g(r)>0$ for any $r\in(0,\infty)$. Otherwise, we may assume that there is a point $r_{0}>0$ such that $g_{h}(r_{0})=0$. Since $r_{0}$ is a minimum point for the function $g_{h}(r)$, we have $g'_{h}(r_{0})=0$. Applying the uniqueness theorem for the initial value problem of ordinary differential equations, we obtain $g_{h}(r)=0$ for all $r\in(0,\infty)$, which contradicts the fact $g_{h}(\infty)>0$. Hence $g_{h}(r)>0$ for any $r\in(0,\infty)$. In view of (\ref{a2}), we have
\begin{equation}{\label{b24}}
(r^{2}g'_{h})'=e^{2}g_{h}h^{2}r^{2}>0,~r\in(0,\infty),
\end{equation}
so that $r^{2}g'_{h}(r)$ is strictly increasing on $(0,\infty)$. Using (\ref{a11}), we immediately obtain $g'_{h}(r)>0$ for all $r\in(0,\infty)$. Therefore, $g_{h}$ is strictly increasing on $(0,\infty)$.

In the following, we will prove $I(h,g_{h})<0$. Integrating equation (\ref{b24}) over $(1,r)$, for any $r\in(1,R)$ we have
\begin{equation}
r^{2}g'_{h}(r)-g'_{h}(1)=e^{2}\int_{1}^{r}g_{h}(s)h^{2}(s)s^{2}\mathrm{d}s>e^{2}h_{0}^{2}g_{h}(1)\int^{r}_{1}s^{2}\mathrm{d}s=\frac{e^{2}h_{0}^{2}g(1)}{3}(r^{3}-1),\notag
\end{equation}
so that
\begin{equation}
g_{h}'(r)>c_{1}r+\frac{c_{2}}{r^{2}},~1<r<R,\notag
\end{equation}
where $c_{1}=\frac{e^{2}h_{0}^{2}g(1)}{3}$ and $c_{2}=g'_{h}(1)-\frac{e^{2}h_{0}^{2}g(1)}{3}$. It follows that
\begin{equation}
\int_{1}^{R}g_{h}'^{2}r^{2}\mathrm{d}r>\int_{1}^{R}\left(c_{1}r+\frac{c_{2}}{r^{2}}\right)^{2}r^{2}\mathrm{d}r=\frac{c_{1}^{2}}{5}R^{5}+O(R^{2}).\notag
\end{equation}
Since $h(r)=h_{0}$ on $(1,R)$, a simple calculation shows that
\begin{equation}
\int_{1}^{R}\left(m^{2}h^{2}-\frac{h_{1}}{4}h^{4}+\frac{h_{2}}{12}h^{6}\right)r^{2}\mathrm{d}r=\frac{1}{3}\left(m^{2}h_{0}^{2}-\frac{h_{1}}{4}h_{0}^{4}+\frac{h_{2}}{12}h_{0}^{6}\right)(R^{3}-1).\notag
\end{equation}
 Consequently, we are led to the following estimate
\begin{equation}
\begin{aligned}
I_{[0,R]}(h,g_{h})&=\frac{1}{2}\int_{0}^{R}\left(h'^{2}-g_{h}'^{2}-\frac{1}{e^{2}}g_{h}^{2}h^{2}+m^{2}h^{2}-\frac{h_{1}}{4}h^{4}+\frac{h_{2}}{12}h^{6}\right)r^{2}\mathrm{d}r\notag\\
&<-\frac{c_{1}^{2}}{5}R^{5}+O(R^{3}).
\end{aligned}
\end{equation}
Besides, it is easily seen that
\begin{equation}
I_{(R,\infty)}(h,g_{h})
\leq\int^{\infty}_{R}\left(h'^{2}+m^{2}h^{2}-\frac{h_{1}}{2}h^{4}+\frac{h_{2}}{12}h^{6}\right)r^{2}\mathrm{d}r=O(R^{2}).\notag
\end{equation}
Therefore, there exists some $R_{1}$ sufficiently large such that for any $R>R_{1}$, $I(h,g_{h})<0$.

Summarizing the above results, we conclude that the function $(f,g)$ obtained as the limit
of the minimizing sequence $\{(f_{n},g_{n})\}$ for the problem (\ref{b7}) satisfies boundary condition (\ref{a5}), $f(r)\geq0$, $g(r)\geq0$
for all $r>0$, $E(f,g)<\infty$, and $I(f,g)<0$. Thus, $(f,g)$ is a nontrivial solution to the constrain minimization problem (\ref{b7}). The proof of lemma \ref{l5} is complete.
\end{proof}

\section{Weak solutions of governing equations}
In this section, we will show that the obtained minimizer is a solution  to equations (\ref{a1}) and (\ref{a2}) subject to partial boundary condition (\ref{a5}).

\begin{lemma}
The nontrivial solution $(f,g)$ to the optimization problem (\ref{b7}) obtained in the last section satisfies the equations (\ref{a1})-(\ref{a2}).
\end{lemma}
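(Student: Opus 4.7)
The plan is to obtain equations (\ref{a1}) and (\ref{a2}) as two separate Euler--Lagrange conditions for the constrained minimization problem (\ref{b7}). Equation (\ref{a2}) is already built into the defining constraint of $\mathscr{C}$, while (\ref{a1}) will follow from varying $f$ along the graph $f \mapsto g_{f}$ and exploiting the fact that the $g$-variations of $I$ vanish automatically on $\mathscr{C}$.

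First, for equation (\ref{a2}), observe that any smooth, compactly supported test function $\tilde{g}$ on $(0,\infty)$ satisfies $\tilde{g}(\infty)=0$ and $J_{f}(g+\tilde{g})<\infty$, hence qualifies as admissible in (\ref{a10}). Integration by parts then gives
\begin{equation*}
\int_{0}^{\infty}\left\{-\frac{1}{e^{2}}(r^{2}g')' + gf^{2}r^{2}\right\}\tilde{g}\,\mathrm{d}r = 0,
\end{equation*}
which is precisely the weak form of (\ref{a2}).

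For equation (\ref{a1}), I would argue as follows. Let $\tilde{f}$ be a test function of sufficient regularity and set $f_{t}=f+t\tilde{f}$. By the discussion following Lemma \ref{l0}, there exists a unique minimizer $g_{t}:=g_{f_{t}}$ of $J_{f_{t}}$ with $g_{t}(\infty)=g_{\infty}$, so that $(f_{t},g_{t})\in\mathscr{C}$ for all small $t$. Minimality of $(f,g)=(f_{0},g_{0})$ yields
\begin{equation*}
\left.\frac{\mathrm{d}}{\mathrm{d}t}I(f_{t},g_{t})\right|_{t=0}=0.
\end{equation*}
The chain rule splits this derivative into $I_{f}(f,g)[\tilde{f}]+I_{g}(f,g)[\dot{g}]$, where $\dot{g}=\partial_{t}g_{t}|_{t=0}$. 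Because $g_{t}(\infty)=g_{\infty}$ for every $t$, the boundary condition $\dot{g}(\infty)=0$ holds, so $\dot{g}$ is itself an admissible test function in (\ref{a10}). Since $I=K-J_{f}$, we obtain $I_{g}(f,g)[\dot{g}]=-(J_{f})_{g}(g)[\dot{g}]=0$ by the constraint. Thus only the $f$-variation survives, giving
\begin{equation*}
\int_{0}^{\infty}\left\{f'\tilde{f}'+(m^{2}-g^{2})f\tilde{f}-\frac{h_{1}}{2}f^{3}\tilde{f}+\frac{h_{2}}{4}f^{5}\tilde{f}\right\}r^{2}\,\mathrm{d}r=0
\end{equation*}
for all admissible $\tilde{f}$. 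Integration by parts recovers the classical form (\ref{a1}).

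The main technical obstacle is justifying the differentiability of $t\mapsto g_{t}$ together with the chain rule used above. This can be handled by an implicit function theorem argument applied to the strictly convex functional $g\mapsto J_{f}(g)$: its Hessian is positive definite on the affine space of $g$ with $g(\infty)=g_{\infty}$, yielding a coercive linear operator whose inversion produces the smooth dependence of $g_{f}$ on $f$. The uniform bound $0\le g\le g_{\infty}$ from (\ref{a17}), combined with the decay estimates (\ref{a14}) and (\ref{bb}), ensures that all integrals and boundary terms arising in the computation are finite, so the formal variational calculation can be made rigorous.
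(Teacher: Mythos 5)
Your proposal is correct and follows essentially the same route as the paper: both treat the constraint (\ref{a10}) as the weak form of (\ref{a2}), and both derive (\ref{a1}) by varying $f$ along the graph $t\mapsto g_{t}=g_{f+t\tilde f}$ and annihilating the $g$-variation via the envelope identity $(J_f)_g(g)[\hat g]=0$. The one genuine difference is how the hard step --- regularity of $t\mapsto g_t$ --- is handled. You outsource it to an implicit function theorem built on the strict convexity of $J_f$; the paper instead avoids proving differentiability in any function-space sense: it subtracts the constraint identity for $(f+t\tilde f,g_t)$ from that for $(f,g)$, tests with $\tilde g_t=g_t-g$, and applies Young's inequality to obtain the uniform difference-quotient bound (\ref{q3}),
\begin{equation*}
\int_0^\infty\left\{\left(\frac{\tilde g_t'}{t}\right)^2+\frac{e^2}{2}\left(\frac{\tilde g_t}{t}\right)^2 f^2\right\}r^2\,\mathrm{d}r\le c,
\end{equation*}
which is exactly the minimal information needed to conclude $J_f(\hat g)<\infty$ and $\hat g(\infty)=0$, i.e., admissibility of $\hat g$ in (\ref{a10}), giving (\ref{q5}). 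This also repairs a small slip in your write-up: admissibility of $\dot g$ requires not only $\dot g(\infty)=0$ but also finiteness of $J_f(g+\dot g)$ (the weighted Dirichlet integral); the boundary condition alone does not suffice, though your IFT framework would supply it, since $\dot g$ lands in the Hilbert space whose squared norm is precisely the Hessian form of $J_f$, so Lax--Milgram applies there. The trade-off between the two routes: your IFT argument, once its functional-analytic setting is established, yields more (smooth dependence $f\mapsto g_f$), while the paper's direct estimate is more elementary and extracts only bounds on difference quotients --- indeed the paper itself tacitly assumes $\hat g$ exists and relies on (\ref{q3})--(\ref{q4}) for the rigorous content. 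Both conclude identically, with elliptic regularity upgrading the weak solution to a classical one; your closing ``integration by parts recovers the classical form'' implicitly uses that regularity and should be stated as such.
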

\begin{proof}
It is sufficient to show that all the equations (\ref{a1}) and (\ref{a2}) are fulfilled in  the  weak sense. Let $\tilde{f}\in C_{0}^{1}(0,\infty)$ and $t$ be a real parameter confined in a small interval, saying, $|t|<\frac{1}{2}$. For any $|t|<\frac{1}{2}$, there is an unique corresponding function $g_{t}=g_{f+t\tilde{f}}$ such that $(f+t\tilde{f},g_{t})\in\mathscr{C}$. We denote
\begin{equation}
g_{t}=g+\tilde{g}_{t},~~\hat{g}=\left(\frac{d}{dt}\tilde{g}_{t}\right)\bigg|_{t=0}.\notag
\end{equation}

Since $g_{t}(\infty)=g_{\infty}$, $J_{f+t\tilde{f}}(g_{t})<\infty$, we have $\tilde{g}_{t}(\infty)=0$ and $J_{f}(g+\tilde{g}_{t})<\infty$, which means that $\tilde{g}_{t}$ can be considered as a test function in the constrain condition (\ref{a12}). Thus, there hold
\begin{equation}\label{q1}
\int_{0}^{\infty}\left\{g'_{t}\tilde{g}'_{t}+e^{2}g_{t}\tilde{g}_{t}(f+t\tilde{f})^{2}\right\}r^{2}\mathrm{d}r=0,
\end{equation}
\begin{equation}\label{q2}
\int_{0}^{\infty}\left\{g'\tilde{g}'_{t}+e^{2}g\tilde{g}_{t}f^{2}\right\}r^{2}\mathrm{d}r=0.
\end{equation}
Subtracting (\ref{q2}) from (\ref{q1}), we have the relation
\begin{equation}
\int^{\infty}_{0}\left\{\tilde{g}'^{2}_{t}+e^{2}\tilde{g}^{2}_{t}f^{2}\right\}r^{2}\mathrm{d}r=-\int^{\infty}_{0}e^{2}\left\{2tg_{t}\tilde{g}_{t}f\tilde{f}+t^{2}g_{t}\tilde{g}_{t}\tilde{f}^{2}\right\}r^{2}\mathrm{d}r.\notag
\end{equation}
Using the Young inequality, we obtain
\begin{equation}
\int^{\infty}_{0}\left\{\left(\frac{\tilde{g}'_{t}}{t}\right)^{2}+\frac{e^{2}}{2}\left(\frac{\tilde{g}_{t}}{t}\right)^{2}f^{2}\right\}r^{2}\mathrm{d}r
\leq e^{2}\int^{\infty}_{0}\left\{2g^{2}_{t}\tilde{f}^{2}+|g_{t}\tilde{g}_{t}|\tilde{f}^{2}\right\}r^{2}\mathrm{d}r,\notag
\end{equation}
which gives us
\begin{equation}\label{q3}
\int^{\infty}_{0}\left\{\left(\frac{\tilde{g}'_{t}}{t}\right)^{2}+\frac{e^{2}}{2}\left(\frac{\tilde{g}_{t}}{t}\right)^{2}f^{2}\right\}r^{2}\mathrm{d}r\leq c,~~t\neq0,
\end{equation}
where $c>0$ depends on $\tilde{f}$ but independent of $t$ because of the uniform bound
$$
0\leq g,g_{t}\leq g_{\infty}
$$
and the fact that $|\tilde{g}_{t}|\leq|g|+|g_{t}|$. Moreover, we obtain from (\ref{q3})
\begin{equation}\label{q4}
\left|\frac{\tilde{g}_{t}(r)}{t}\right|\leq\int^{\infty}_{r}\left|\frac{\tilde{g}'_{t}(s)}{t}\right|\mathrm{d}s
\leq\left(\int^{\infty}_{r}\frac{1}{s^{2}}\mathrm{d}s\right)^{\frac{1}{2}}\left(\int^{\infty}_{r}\left(\frac{\tilde{g}'_{t}(s)}{t}\right)^{2}s^{2}\mathrm{d}s\right)^{\frac{1}{2}}
\leq cr^{-\frac{1}{2}},~~t\neq0.
\end{equation}
Note that $\tilde{g}_{t}(r)=0$ at $t=0$ for all $r>0$. Taking $t\rightarrow0$ in (\ref{q3}) and (\ref{q4}), we arrive at
$$
J_{f}(\hat{g})<\infty,~~\hat{g}(\infty)=0.
$$
Hence $\hat{g}$ can be used as a test function in (\ref{a10}). Then   the below  equality holds,
\begin{equation}\label{q5}
\int_{0}^{\infty}\left\{g'\hat{g}'+e^{2}g\hat{g}f^{2}\right\}\mathrm{d}r=0.
\end{equation}

Since $(f,g)$ minimizes $I(f,g)$, it is seen that
$$
\left(\frac{d}{dt}I(f+t\tilde{f},g_{t})\right)\bigg|_{t=0}=0.
$$
In view of (\ref{q5}), this expression leads us to get the weak form of (\ref{a1})
\begin{equation}
\int_{0}^{\infty}\left\{f'\tilde{f}'+(m^{2}-g^{2})f\tilde{f}-\frac{h_{1}}{2}f^{3}\tilde{f}+\frac{h_{2}}{4}f^{5}\tilde{f}\right\}r^{2}\mathrm{d}r=0.\notag
%=\int_{0}^{\infty}\left\{\frac{1}{e^{2}}g'\tilde{g}'_{1}+g\tilde{g}_{1}f^{2}\right\}\mathrm{d}r
\end{equation}

Since (\ref{a10}) is the weak form of (\ref{a2}), we find that $(f,g)$ is a weak solution of (\ref{a1})-(\ref{a2}). Using elliptic regularity theory, $(f,g)$ is a classical solution of (\ref{a1})-(\ref{a2}) subject to the boundary conditions $f(\infty)=0$ and $g(\infty)=g_{\infty}$.
\end{proof}

\section{Remaining boundary condition and basic properties}
In this section, we will prove that the solution $(f,g)$ obtained in the last section satisfies the remaining boundary condition (\ref{a4}). As byproducts, we will derive some useful properties of the solution.
\begin{lemma}
For any $m$, $h_{1}$, $h_{2}$ satisfying (\ref{aa5}), let $(f,g)$ be a non-trivial solution of (\ref{a1})-(\ref{a2}) subject to (\ref{a5}). Then
\begin{equation}\label{q30}
0<f<\sqrt{\frac{2h_{1}}{h_{2}}}, ~~0<g<g_{\infty},
\end{equation}
for all $r>0$.
\end{lemma}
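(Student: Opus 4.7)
My plan is to derive the $g$-bounds first via ODE-uniqueness arguments of the sort already used in Section~3, and then tackle the real obstacle, the upper bound $f<c:=\sqrt{2h_1/h_2}$, with a maximum-principle argument built on an algebraic identity that fires exactly at the threshold $c$.

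First I would dispose of the bounds on $g$. From Section~2 we already have $0\le g\le g_\infty$ and $g$ non-decreasing. If $g(r_0)=0$ at some $r_0>0$, then $r_0$ is a minimum of the non-negative $g$, so $g'(r_0)=0$; reading (\ref{a2}) as a linear ODE for $g$ with smooth coefficient $e^2 f^2$ on compact subintervals of $(0,\infty)$ gives $g\equiv 0$, contradicting $g(\infty)=g_\infty$. If $g(r_1)=g_\infty$ at some finite $r_1$, monotonicity forces $g\equiv g_\infty$ on $[r_1,\infty)$; substituting into (\ref{a2}) yields $f\equiv 0$ on $[r_1,\infty)$, uniqueness for (\ref{a1}) propagates this to $f\equiv 0$ on all of $(0,\infty)$, and then (\ref{a2}) reduces to $(r^2g')'=0$, i.e.\ $g=A+B/r$, which together with the remaining conditions forces $g\equiv g_\infty$, making $(f,g)=(0,g_\infty)$ trivial and contradicting $I(f,g)<0$ from Lemma~\ref{l5}. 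The same uniqueness idea handles $f>0$: the minimizer satisfies $f\ge 0$ by construction, and an interior zero would give $f\equiv 0$ by uniqueness for (\ref{a1}), again contradicting non-triviality.

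The main obstacle is the upper bound on $f$. Setting $Q(u):=(m^2-g^2)-\frac{h_1}{2}u+\frac{h_2}{4}u^2$, the key algebraic fact is $-\frac{h_1}{2}c^2+\frac{h_2}{4}c^4=0$, so $Q(c^2)=m^2-g^2>0$; moreover $Q$ attains its minimum at $u=h_1/h_2<c^2$, so $Q$ is increasing on $[c^2,\infty)$, giving $Q(f^2)\ge m^2-g^2>0$ whenever $f\ge c$. Writing $u(x)=f(|x|)$ (which lies in $H^1(\mathbb{R}^3)$ because $E(f,g)<\infty$), equation (\ref{a1}) becomes $\Delta u=fQ(f^2)$, strictly positive on $\{f>c\}$. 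I would test the weak form against the truncation $\psi=(f-c)_+\in H^1(\mathbb{R}^3)$ (whose support is bounded because $f(\infty)=0$) to obtain
\[
0\le\int_{\{f>c\}}|\nabla f|^2\,dx=-\int_{\{f>c\}}fQ(f^2)(f-c)\,dx\le 0,
\]
forcing $\{f>c\}$ to have Lebesgue measure zero; by continuity of the radial $H^1$ function $f$ on $(0,\infty)$ this yields $f\le c$ everywhere. To upgrade to the strict inequality, if $f(r_*)=c$ at some $r_*>0$ then $r_*$ is an interior maximum, so $f'(r_*)=0$ and $f''(r_*)\le 0$, but (\ref{a1}) gives $f''(r_*)=(m^2-g^2(r_*))c>0$, a contradiction.

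The hard part is genuinely this last bound: hypothesis (\ref{aa5}) is too weak to force $Q$ to be non-negative for all $u$ (that would require the stronger assumption $h_1^2<4h_2(m^2-g_\infty^2)$), so no global maximum principle applies to $f$ itself; positivity of the right-hand side of (\ref{a1}) revives only at the specific threshold $c=\sqrt{2h_1/h_2}$ where the cubic and quintic nonlinearities exactly cancel, and that is precisely the threshold recorded in the statement of the lemma.
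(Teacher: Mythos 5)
Your proposal is correct, and for the central bound $f\le c:=\sqrt{2h_{1}/h_{2}}$ it takes a genuinely different route from the paper. The paper argues purely radially: it first shows $\liminf_{r\to0}r^{2}f'(r)=0$ from finite energy, then reads (\ref{a1}) as $(r^{2}f')'=r^{2}fQ(f^{2})$ (in your notation) and notes that above the threshold the right-hand side is positive, so $r^{2}f'$ increases from zero, $f'>0$, and the decay $f(\infty)=0$ fails; you instead view $u(x)=f(|x|)\in H^{1}(\mathbb{R}^{3})$ and run a Stampacchia truncation with $\psi=(f-c)_{+}$, concluding $|\{f>c\}|=0$. Both are sound, and your version is in fact tighter on one logical point: the paper's sentence ``if $f>\sqrt{2h_{1}/h_{2}}$, we have $(r^{2}f')'>0$ for all $r>0$'' glosses over the case where $f$ exceeds the threshold only on part of $(0,\infty)$, whereas your test-function identity handles ``$f>c$ somewhere'' directly. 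The price is functional-analytic overhead you should make explicit: $(f-c)_{+}$ need not vanish at the origin, so testing with it requires extending the weak formulation of Section 4 (stated there for $\tilde{f}\in C_{0}^{1}(0,\infty)$) across $r=0$, e.g.\ by a cutoff argument using that a point has zero $H^{1}$-capacity in $\mathbb{R}^{3}$ together with $\int_{0}^{\infty}f^{6}r^{2}\,\mathrm{d}r<\infty$ under (\ref{aa5}); this step plays exactly the role of the paper's $\liminf_{r\to0}r^{2}f'(r)=0$. Your remaining steps coincide with or harmlessly vary the paper's: positivity of $f$ and $g$ via ODE uniqueness at an interior zero, and strictness $f<c$ via the second-derivative test at an interior maximum where $-\frac{h_{1}}{2}c^{3}+\frac{h_{2}}{4}c^{5}=0$, are the paper's arguments verbatim, while for $g<g_{\infty}$ you use monotonicity propagation ($g\equiv g_{\infty}$ on $[r_{1},\infty)$ forces $f\equiv0$, hence triviality) where the paper reuses the interior-maximum test $g''(r_{1})=e^{2}g_{\infty}f^{2}(r_{1})>0$, which is shorter but needs $f>0$ first. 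Your closing observation is also accurate and worth keeping: (\ref{aa5}) allows $Q(u)$ to be negative for intermediate $u$, since nonnegativity for all $u$ would need $h_{1}^{2}\le 4h_{2}(m^{2}-g_{\infty}^{2})$, so the sign argument genuinely hinges on the exact cancellation at $u=2h_{1}/h_{2}$.
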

\begin{proof}
By virtue of lemma \ref{l5}, we have
$$
f(r)\geq0,~~~0\leq g(r)\leq g_{\infty},~~r>0.
$$
We first claim
\begin{equation}\label{h10}
\liminf_{r\rightarrow0}r^{2}f'(r)=0.
\end{equation}
Suppose otherwise that (\ref{h10}) is not valid. Then there are some $c_{0}>0$ and $r_{0}>0$ such that
\begin{equation}
|r^{2}f'(r)|\geq c_{0},~~r\in(0,r_{0}),\notag
\end{equation}
which leads to
$$
\infty=\int^{r_{0}}_{0}\frac{c^{2}_{0}}{r^{2}}\mathrm{d}r\leq\int^{r_{0}}_{0}(f')^2r^{2}\mathrm{d}r.
$$
This contradicts with $E(f,g)<\infty$. So (\ref{h10}) is valid.

We rewrite equation (\ref{a1}) as following
\begin{equation}\label{h11}
(r^{2}f')'=r^{2}\left((m^{2}-g^{2}(r))f(r)-\frac{h_{1}}{2}f^{3}(r)+\frac{h_{2}}{4}f^{5}(r)\right).
\end{equation}
Because of $m>g_{\infty}$, if $f>\sqrt{\frac{2h_{1}}{h_{2}}}$, we have $(r^{2}f')'>0$ for all $r>0$. From (\ref{h10}), it is obvious that $f'(r)>0$ for any $r>0$, which implies that the boundary condition $f(r)\rightarrow0~(r\rightarrow\infty)$ is not met. Therefore we conclude $0\leq f(r)\leq\sqrt{\frac{2h_{1}}{h_{2}}}$.

We now prove $f(r)>0$. If for some $r_{0}>0$ such that $f(r_{0})= 0$, which indicates that $r_{0}$ is a minimum point of $f(r)$, then $f'(r_{0})=0$. Applying the uniqueness theorem for the initial value problem of ordinary differential equations, we have $f(r)=0$ for all $r\in(0,\infty)$, which contradicts with the statement that $f(r)$ is a non-trivial solution. Similarly, we  have $g(r)>0$.

In order to prove $f<\sqrt{\frac{2h_{1}}{h_{2}}}$, we now assume that there exists a point $r_{1}\in(0,\infty)$ such that $f(r_{1})=\sqrt{\frac{2h_{1}}{h_{2}}}$. Obviously, $r_{1}$ is a  maximum point of $f(r)$, so that $f'(r_{1})=0$, $f''(r_{1})\leq0$. We have, in view of (\ref{a1}),
\begin{equation}
f''(r_{1})=-\frac{2}{r_{1}}f'(r_{1})+(m^{2}-g^{2}(r_{1}))f(r_{1})-\frac{h_{1}}{2}f^{3}(r_{1})+\frac{h_{2}}{4}f^{5}(r_{1})>0,
\end{equation}
which is contradict the fact $f''(r_{1})\leq0$. Thus $f(r)<\sqrt{\frac{2h_{1}}{h_{2}}}$ for all $r\in(0,\infty)$. A similar argument used
in the proof of $f<\sqrt{\frac{2h_{1}}{h_{2}}}$ gives $g<g_{\infty}$.
\end{proof}

\begin{lemma}
 For any $m$, $h_{1}$, $h_{2}$ satisfying (\ref{aa5}), let $(f,g)$ be a non-trivial solution of (\ref{a1})-(\ref{a2}) subject to (\ref{a5}). Then $g$ is strictly increasing on $(0,\infty)$.
\end{lemma}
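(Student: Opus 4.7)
The plan is to mirror the argument already carried out around (\ref{b24}) for the specific auxiliary pair $(h, g_h)$, but now applied to the general solution $(f, g)$, leveraging the positivity results from the previous lemma.

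First, I would observe that the previous Lemma already secures the strict positivity $f(r) > 0$ and $g(r) > 0$ for all $r \in (0,\infty)$. This is exactly the hypothesis needed to turn equation (\ref{a2}) into a differential inequality: rewriting (\ref{a2}) in the self-adjoint form
\begin{equation}
(r^{2} g')' = e^{2} g f^{2} r^{2}, \notag
\end{equation}
the right-hand side is strictly positive for every $r > 0$. Consequently, the map $r \mapsto r^{2} g'(r)$ is strictly increasing on $(0,\infty)$.

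Next, I would invoke the vanishing limit $\lim_{r\to 0} r^{2} g'(r) = 0$. This is not automatic from the differential equation, but it was established as property (\ref{a11}) in Section 2, by a contradiction argument using the finite-energy condition $\int_{0}^{\infty} (g')^{2} r^{2}\,\mathrm{d}r < \infty$: if $r^{2} g'(r)$ did not vanish as $r \to 0$, it would be bounded below by some positive constant near the origin, forcing $\int_{0}^{r_{0}} (g')^{2} r^{2}\,\mathrm{d}r$ to diverge like $\int_{0}^{r_{0}} r^{-2}\,\mathrm{d}r$. Since $E(f,g) < \infty$ for our minimizer, the same reasoning applies verbatim to the solution obtained in Section 3. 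Combining this with the monotonicity of $r^{2} g'(r)$ gives $r^{2} g'(r) > 0$, hence $g'(r) > 0$, for every $r > 0$.

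I do not expect a serious obstacle here: both ingredients (strict positivity of $f$ and $g$, and the vanishing of $r^{2} g'$ at the origin) are already in hand, so the proof reduces to a clean two-line argument. The only subtle point to verify is that the auxiliary limit (\ref{a11}), originally derived for the constrained minimizer $g = g_{f}$, remains valid for our $(f,g)$; this is immediate because the finite-energy bound $E(f,g) < \infty$ — which is the sole ingredient — was preserved in the limit of the minimizing sequence in Section 3.
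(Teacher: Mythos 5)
Your proposal is correct and follows essentially the same argument as the paper: positivity of $f$ and $g$ from the preceding lemma gives $(r^{2}g')'=e^{2}gf^{2}r^{2}>0$, so $r^{2}g'$ is strictly increasing, and combined with the limit (\ref{a11}) this yields $g'>0$ on $(0,\infty)$. Your extra remark verifying that (\ref{a11}) carries over to the solution via the finite-energy bound is a sound (and slightly more careful) justification of a step the paper takes for granted.
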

\begin{proof}
In view of (\ref{q30}), we have
\begin{equation}{\label{c22}}
(r^{2}g')'=e^{2}gf^{2}r^{2}>0,~r\in(0,\infty).
\end{equation}
Thus $r^{2}g'(r)$ is strictly increasing on $(0,\infty)$. From (\ref{a11}), we immediately obtain $g'(r)>0$ for all $r\in(0,\infty)$. Therefore, $g$ is strictly increasing on $(0,\infty)$.

Note that if $g_{\infty}<0$, then $g(r)$ is strictly decreasing and $g_{\infty}<g(r)<0$ for all $r>0$.

\begin{lemma}
For any $m$, $h_{1}$, $h_{2}$ satisfying (\ref{aa5}), let $(f,g)$ be a non-trivial solution of (\ref{a1})-(\ref{a2}) subject to (\ref{a5}). Then
\begin{equation}
\lim_{r\rightarrow0}f'(r)=0,~~\lim_{r\rightarrow0}g'(r)=0.\notag
\end{equation}
In other words, the boundary condition (\ref{a4}) holds. Moreover, $f'$~and $g'$ satisfy the following asymptotic estimates
\ber
f'(r)=O(r),~g'(r)=O(r),~r\rightarrow0.\notag
\eer
\end{lemma}
{\bf Proof.} Integrating (\ref{h11}) over $(0,r)$, we have
\begin{equation}
r^{2}f'(r)=\int_{0}^{r}\left((m^{2}-g^{2}(s))f(s)-\frac{h_{1}}{2}f^{3}(s)+\frac{h_{2}}{4}f^{5}(s)\right)s^{2}\mathrm{d}s,\notag
\end{equation}
for any $r>0$. It follows that
\begin{equation}
|f'(r)|\leq\frac{1}{r^{2}}\left(\int_{0}^{r}(m^{2}-g^{2}(s))f(s)s^{2}\mathrm{d}s+\int_{0}^{r}\frac{h_{1}}{2}f^{3}(s)s^{2}\mathrm{d}s+\int_{0}^{r}\frac{h_{2}}{4}f^{5}(s)s^{2}\mathrm{d}s\right).\notag
\end{equation}
By using (\ref{q30}), it is seen that there exists a positive constant $c$ such that
\begin{equation}
|f'(r)|\leq cr,\notag
\end{equation}
for $r$ near zero. Clearly, $f'(r)\rightarrow0$ as $r\rightarrow0$, and $f'$ satisfies the following asymptotic estimates near $r=0$
\ber
f'(r)=O(r).\notag
\eer

The same method shows that $g'(r)=O(r)$ as $r\rightarrow0$. Obviously, $\lim\limits_{r\rightarrow0}g'(r)=0$. The proof of the lemma is complete.
\end{proof}

\begin{lemma}
For any $m$, $h_{1}$, $h_{2}$ satisfying (\ref{aa5}), let $(f,g)$ be a non-trivial solution of (\ref{a1})-(\ref{a2}) subject to (\ref{a4})-(\ref{a5}). Then there hold the asymptotic estimates
\begin{equation}\label{h15}
f(r)=O\left(r^{-1}\exp{(-\sqrt{m^{2}-g^{2}_{\infty}}(1-\varepsilon)r})\right),~~~~g(r)=g_{\infty}+O(r^{-1})
\end{equation}
for $r\rightarrow\infty$, where $0<\varepsilon<1$ is arbitrary.
\end{lemma}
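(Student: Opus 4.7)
The plan is to handle the two asymptotic estimates separately: the $f$ bound via a comparison (maximum-principle) argument for the linearized equation, and the $g$ bound by direct integration of (\ref{a2}). The key observation for $f$ is that the substitution $u(r)=rf(r)$ turns (\ref{a1}) into the one-dimensional ODE
\begin{equation*}
u''(r)=\left[(m^{2}-g^{2}(r))-\tfrac{h_{1}}{2}f^{2}(r)+\tfrac{h_{2}}{4}f^{4}(r)\right]u(r),
\end{equation*}
in which the coefficient of $u$ approaches $\mu^{2}:=m^{2}-g_{\infty}^{2}>0$ as $r\to\infty$, thanks to the boundary conditions $g(\infty)=g_{\infty}$ and $f(\infty)=0$ already established.

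For the $f$-estimate, fix $\varepsilon\in(0,1)$. Since $g(r)\to g_{\infty}$ and $f(r)\to 0$, there is $R>0$ such that
$(m^{2}-g^{2}(r))-\tfrac{h_{1}}{2}f^{2}(r)+\tfrac{h_{2}}{4}f^{4}(r)\geq \mu^{2}(1-\varepsilon)^{2}$ for all $r\geq R$. I would then pick the barrier $V(r)=Ce^{-\mu(1-\varepsilon)r}$ with $C$ large enough so that $V(R)\geq Rf(R)$; this barrier satisfies $V''=\mu^{2}(1-\varepsilon)^{2}V$. Consequently
\begin{equation*}
(u-V)''\geq \mu^{2}(1-\varepsilon)^{2}(u-V)\quad\text{on }[R,\infty),\qquad (u-V)(R)\leq 0,\qquad (u-V)(\infty)=0.
\end{equation*}
If $u-V$ were positive somewhere on $[R,\infty)$ it would attain a positive maximum at some interior point $r_{0}$, at which $(u-V)''(r_{0})\leq 0$, contradicting the differential inequality. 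Hence $u\leq V$ on $[R,\infty)$, which is precisely $f(r)=O\!\bigl(r^{-1}\exp(-\sqrt{m^{2}-g_{\infty}^{2}}(1-\varepsilon)r)\bigr)$.

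For the $g$-estimate, I would integrate (\ref{a2}) from $0$ to $r$, using the already-proved fact $\lim_{r\to 0}r^{2}g'(r)=0$, to get
\begin{equation*}
r^{2}g'(r)=e^{2}\!\int_{0}^{r}\! s^{2}g(s)f^{2}(s)\,\mathrm{d}s.
\end{equation*}
Because $0<g\leq g_{\infty}$ and, by the $f$-decay just obtained, $s^{2}f^{2}(s)$ is integrable on $(0,\infty)$, the right-hand side is bounded by a finite constant $c>0$ uniformly in $r$. Thus $g'(r)\leq c/r^{2}$, and integrating from $r$ to $\infty$ gives
\begin{equation*}
g_{\infty}-g(r)=\int_{r}^{\infty}g'(s)\,\mathrm{d}s\leq \frac{c}{r},
\end{equation*}
which yields $g(r)=g_{\infty}+O(r^{-1})$.

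The main obstacle I anticipate is the comparison step for $f$: one must carefully justify that the positive maximum of $u-V$ on $[R,\infty)$ is attained at a finite interior point, and one must first know that $g(r)\to g_{\infty}$ strongly enough (not merely in the integral sense) to uniformly dominate the coefficient by $\mu^{2}(1-\varepsilon)^{2}$ for large $r$. The convergence $g(r)\to g_{\infty}$ is already guaranteed by the monotonicity of $g$ together with the boundary condition $g(\infty)=g_{\infty}$, and the analogous bound $f(r)\to 0$ follows from the pointwise estimate used in (\ref{bb}); once these are in hand, the rest reduces to routine barrier computations and the comparison principle.
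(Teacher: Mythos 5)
Your treatment of $f$ follows the paper's proof almost verbatim (same substitution $\hat f=rf$, same barrier $C\exp(-\sigma(1-\varepsilon)r)$ with $\sigma=\sqrt{m^{2}-g_{\infty}^{2}}$, same differential inequality on $[R,\infty)$), but it contains one genuine gap: the boundary behavior at infinity. You assert $(u-V)(\infty)=0$ and claim this follows from $f(r)\to 0$ via the pointwise estimate (\ref{bb}). That estimate only gives $f(r)=O(r^{-1/2})$, hence $u(r)=rf(r)=O(r^{1/2})$: a priori $u$ may be unbounded, so neither $u(\infty)=0$ nor the attainment of a positive maximum of $u-V$ at a finite interior point is justified, and the maximum-principle step as you state it does not close. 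The paper repairs exactly this point using finite energy: since $\int_{0}^{\infty}f^{2}r^{2}\,\mathrm{d}r<\infty$, one must have $\liminf_{r\to\infty}r^{3/2}f(r)=0$, so there is a sequence $r_{j}\to\infty$ with $\hat f(r_{j})\to 0$; one then applies the no-positive-interior-maximum property of $w''\geq\kappa^{2}w$ on the finite intervals $[R,r_{j}]$, where $w=\hat f-\eta$ satisfies $w(R)\leq 0$ and $w(r_{j})\to 0$, and lets $j\to\infty$ to conclude $\hat f\leq\eta$ on $(R,\infty)$. With that insertion your barrier computation becomes the paper's argument; without it, the step fails as written.

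Your $g$-estimate, by contrast, is correct and takes a mildly different and in fact more direct route than the paper. You integrate $(r^{2}g')'=e^{2}gf^{2}r^{2}$ from $0$, using $\lim_{r\to 0}r^{2}g'(r)=0$ (established in (\ref{a11})), to get $r^{2}g'(r)=e^{2}\int_{0}^{r}s^{2}g(s)f^{2}(s)\,\mathrm{d}s\leq c$, where only $0<g\leq g_{\infty}$ and $\int_{0}^{\infty}f^{2}s^{2}\,\mathrm{d}s<\infty$ are needed (the exponential decay of $f$ you invoke is more than sufficient); then $0\leq g_{\infty}-g(r)=\int_{r}^{\infty}g'(s)\,\mathrm{d}s\leq c/r$, using $g'>0$. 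The paper instead writes (\ref{a2}) as $\bigl(r(g-g_{\infty})\bigr)''=e^{2}gf^{2}r$, extracts a finite-energy sequence along which $rg'\to 0$ to fix the constant of the first integration, and integrates twice. Your version avoids the sequence extraction for $g$ entirely and is the cleaner of the two; both yield $g(r)=g_{\infty}+O(r^{-1})$.
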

\begin{proof}
To get decay estimate for $f$, we introduce a new function $\hat{f}=rf$. Clearly, $\hat{f}>0$ for any $r>0$. In view of (\ref{a1}), we have
\begin{equation}
\hat{f}''=(m^{2}-g^{2})\hat{f}-\frac{h_{1}}{2}f^{2}\hat{f}+\frac{h_{2}}{4}f^{4}\hat{f}.\notag
\end{equation}
Define the comparison function
\begin{equation}\label{q20}
\eta(r)=C\exp{(-\sigma(1-\varepsilon)r)},
\end{equation}
where $\sigma=\sqrt{m^{2}-g^{2}_{\infty}}$ and $C>0$ is a constant to be chosen later.
Then,
\begin{equation}\label{qq}
(\hat{f}-\eta)''=\sigma^{2}(1-\varepsilon)^{2}(\hat{f}-\eta)
+\left((m^{2}-g^{2})-\frac{h_{1}}{2}f^{2}+\frac{h_{2}}{4}f^{4}-\sigma^{2}(1-\varepsilon)^{2}\right)\hat{f}.\notag
\end{equation}
Since $g\rightarrow g_{\infty}$ as $r\rightarrow\infty$, we can find a suitably large $r_{\varepsilon}>0$ such that
\begin{equation}
(m^{2}-g^{2})-\frac{h_{1}}{2}f^{2}+\frac{h_{2}}{4}f^{4}-\sigma^{2}(1-\varepsilon)^{2}>0,~~r>r_{\varepsilon}\notag
\end{equation}
which gives us
\begin{equation}\label{q21}
(\hat{f}-\eta)''\geq\sigma^{2}(1-\varepsilon)^{2}(\hat{f}-\eta),
\end{equation}
for any $r>r_{\varepsilon}$. Choose the coefficient $C$ in (\ref{q20}) large enough such that  $(\hat{f}-\eta)(r_{\varepsilon})\leq0$. Furthermore, the finite energy implies that there is a sequence $\{r_{j}\}$, $r_{j}\rightarrow\infty$ as $j\rightarrow\infty$ so that $\hat{f}(r_{j})\rightarrow0$ as $j\rightarrow\infty$. Using this and applying the maximum principle in (\ref{q21}), there holds $\hat{f}<\eta$ for all $r>r_{\varepsilon}$. So the decay estimate for $f$ near infinity stated in (\ref{h15}) is established.

Next we consider the decay estimate for $g$. The equation (\ref{a2}) can be rewritten as
\begin{equation}\label{q22}
\left(r(g(r)-g_{\infty})\right)''=e^{2}g(r)f^{2}(r)r.
\end{equation}
In view of the finite energy condition, there exists a sequence $\{r_{k}\}$, $r_{k}\rightarrow\infty$ as $k\rightarrow\infty$ so that $rg'(r)\rightarrow0$ as $r\rightarrow\infty$. Hence, from (\ref{q22}), we have
\begin{equation}\label{q24}
\left(r(g(r)-g_{\infty})\right)'=-\int_{r}^{\infty}e^{2}g(s)f^{2}(s)s\mathrm{d}s.
\end{equation}
Since $0<g(r)<g_{\infty}$ for any $r>0$, inserting the expression of $f$ into (\ref{q24})
we see that the function $(r[g(r)-g_{\infty}])'$ also vanishes exponentially fast at infinity, so that
\begin{equation}
g(r)=g_{\infty}+O(r^{-1}), ~r\rightarrow\infty\notag
\end{equation}
as expected. Thus the lemma follows.
\end{proof}

Note that the gauged Q-ball does not exist when $|g_{\infty}|>m$ since the asymptotic behavior (\ref{h15}) about $f(r)$ shows oscillating behaviour, leading to an infinite energy.

\begin{lemma}
For any $m$, $h_{1}$, $h_{2}$ satisfying (\ref{aa5}), let $(f,g)$ be a non-trivial solution of (\ref{a1})-(\ref{a2}) subject to (\ref{a4})-(\ref{a5}) with fixed the electric charge
\ber\label{q25}
Q(g_{\infty})=4\pi\int_{0}^{\infty}gf^{2}r^{2}\mathrm{d}r.
\eer
Then $Q$ satisfies the property $Q(g_{\infty})\rightarrow0$ as $g_{\infty}\rightarrow0$.
\end{lemma}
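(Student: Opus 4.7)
The plan is to reduce the problem to a uniform $L^2$--bound on $f$ (weighted by $r^2$) and then use the uniform bound $g<g_\infty$. Since $0<g(r)<g_\infty$ for all $r>0$ by (\ref{q30}), the obvious estimate
\[
Q(g_{\infty}) \;=\; 4\pi e\int_{0}^{\infty} g f^{2}\,r^{2}\,\dd r
\;\leq\; 4\pi e\,g_{\infty}\int_{0}^{\infty} f^{2}\,r^{2}\,\dd r
\]
shows that it is enough to prove that $\int_{0}^{\infty} f^{2} r^{2}\dd r$ stays bounded (in fact tends to $0$) as $g_{\infty}\to0$. So the real task is to control the $L^2_{r^2\,dr}$--norm of the $f$--component of the minimizer uniformly in $g_{\infty}$.

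For this I would use the partial coerciveness inequality (\ref{b6}) together with an upper bound for the minimum value of $I$ on $\mathscr{C}$. The key observation is that the trivial pair $(0,g_\infty)$ lies in $\mathscr{C}$ (the constraint (\ref{a10}) is trivially satisfied when $f\equiv 0$ and $g$ is constant) and gives $I(0,g_\infty)=0$; hence any minimizer $(f,g)$ satisfies $I(f,g)\leq 0$. Combining this with (\ref{b6}) yields
\[
\frac12\int_{0}^{\infty}\bigl((f')^{2}+c\,f^{2}\bigr)r^{2}\,\dd r \;\leq\; \frac{g_{\infty}^{2}}{e^{2}},
\]
so that $\int_{0}^{\infty} f^{2} r^{2}\,\dd r \leq 2 g_{\infty}^{2}/(c e^{2})$. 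Plugging this into the estimate for $Q$ gives
\[
Q(g_{\infty}) \;\leq\; \frac{8\pi\, g_{\infty}^{3}}{c\,e^{2}},
\]
and letting $g_\infty\to 0$ will finish the proof.

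The one subtle point is that the coerciveness constant $c$ in (\ref{b3}) depends on $g_\infty$; explicitly, minimizing the quadratic $(m^{2}-g_{\infty}^{2}) - \tfrac{h_1}{4}x+\tfrac{h_2}{12}x^{2}$ in $x=f^2$ one sees $c(g_\infty) = (m^{2}-g_\infty^{2}) - \tfrac{3h_1^2}{16h_2}$. Under (\ref{aa5}) this is positive, and as $g_\infty\to 0$ it converges to the strictly positive constant $m^2 - \tfrac{3h_1^2}{16h_2}$, so $c(g_\infty)$ is bounded below by a positive constant for all sufficiently small $g_\infty$. This is the mild technical step that has to be checked, but it is not a real obstacle. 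Once this uniformity of $c$ is recorded, the chain of inequalities above delivers $Q(g_\infty)=O(g_\infty^{3})\to 0$ as $g_\infty\to 0$, which is the stated conclusion.
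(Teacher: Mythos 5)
Your proof is correct, but it takes a genuinely different route from the paper's. The paper's own proof is qualitative and very short: it invokes the exponential decay of $f$ at infinity together with the bound $0<g<g_{\infty}$ and appeals to the dominated convergence theorem applied to (\ref{q25}). Your argument is variational and quantitative: you observe that $(0,g_{\infty})\in\mathscr{C}$ (the constraint (\ref{a10}) is trivially met for $f\equiv0$, $g\equiv g_{\infty}$), so the constrained minimizer satisfies $I(f,g)\le 0$; feeding this into the partial coerciveness (\ref{b6}) gives the uniform bound $\int_{0}^{\infty}f^{2}r^{2}\,\mathrm{d}r\le 2g_{\infty}^{2}/(ce^{2})$, whence $Q(g_{\infty})=O(g_{\infty}^{3})$, and your check that the coercivity constant $c(g_{\infty})=(m^{2}-g_{\infty}^{2})-3h_{1}^{2}/(16h_{2})$ stays bounded away from zero as $g_{\infty}\to0$ is exactly right. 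What your approach buys: an explicit rate, and a clean resolution of a uniformity issue that the paper's one-line proof glosses over --- the solution $f=f_{g_{\infty}}$ itself changes with $g_{\infty}$, so dominated convergence needs a dominating function independent of $g_{\infty}$, which the paper never exhibits; your uniform $L^{2}$ bound supplies precisely that missing control. What it costs: your argument applies only to the solution obtained as the constrained minimizer (you need $I(f,g)\le0$), whereas the lemma is phrased for an arbitrary non-trivial solution; within the scope of Theorem \ref{THM1}, where the solution is the variational one, this is no real restriction. Two cosmetic slips, neither affecting the conclusion: with $Q=4\pi e\int_{0}^{\infty}gf^{2}r^{2}\,\mathrm{d}r$ your final constant should be $8\pi g_{\infty}^{3}/(ce)$ rather than $8\pi g_{\infty}^{3}/(ce^{2})$, since $Q$ carries a single factor of $e$; and the paper itself is inconsistent about that factor, as (\ref{q25}) omits the $e$ that appears in the statement of Theorem \ref{THM1}.
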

\begin{proof}
Since $f$ vanishes exponentially fast at infinity and $0<g(r)<g_{\infty}$ for all $r>0$, we can apply the dominated convergence theorem to (\ref{q25}) to see that $Q(g_{\infty})\rightarrow0$ as $g_{\infty}\rightarrow0$.
\end{proof}


\begin{thebibliography}{99}

\bibitem{Anagno} K. N. Anagnostopoulos, M. Axenides, E. G. Floratos, and N. Tetradis, Large gauged Q-balls, \emph{Phys. Rev.} D {\bf64}, 125006 (2001).
    %10.1103/PhysRevD.64.125006
\bibitem{ArodzLis} H. Arod\'{z} and J. Lis, Compact Q-balls and Q-shells in a scalar electrodynamics, \emph{Phys. Rev.} D {\bf79}, 045002 (2009). %10.1103/PhysRevD.79.045002

\bibitem{Bailu}  Y. Bai, S. Lu, N. Orlofsky, Origin of nontopological soliton dark matter: solitosynthesis or phase transition, \emph{J. High Energy Phys.} {\bf10}, 181 (2022).

\bibitem{Benci} V. Benci and D. Fortunato, On the existence of stable charged Q-balls, \emph{J. Math. Phys.} {\bf52}, 093701 (2011). %10.1063/1.3629848

\bibitem{Bishara} F. Bishara, G. Johnson, O. Lennon, and J. March-Russell, Higgs assisted Q-balls from Pseudo-Nambu-Goldstone Bosons, \emph{J. High Energy Phys.} {\bf11}, 179 (2017).
%10.1007/JHEP11(2017)179

\bibitem{Bosba} M. Bo\v{s}kovi\'{c} and E. Barausse, Soliton boson stars, Q-balls and the causal Buchdahl bound, \emph{J. Cosmol. Astropart. Phys.} {\bf02}, 032 (2022).

\bibitem{Brihaye} Y. Brihaye, V. Diemer, and B. Hartmann, Charged Q-balls and boson stars and dynamics of charged test particles, \emph{Phys. Rev.} D {\bf89}, 084048 (2014).
%10.1103/PhysRevD.89.084048
\bibitem{ChenGuoDY} R. M. Chen, Y. Guo, D. Spirn and Y. Yang,  Electrically and magnetically charged vortices in the Chern--Simons--Higgs theory. \emph{Proc. R. Soc. Lond. Ser. A Math. Phys. Eng. Sci.} {\bf465}, 3489--3516 (2009).
\bibitem{ChibaKa} T. Chiba, K. Kamada, and  M. Yamaguchi,  Gravitational waves from Q-ball formation, \emph{Phys. Rev.} D {\bf81}, 083503 (2010).
\bibitem{colem85} S. R. Coleman, Q-balls, \emph{Nucl. Phys.} B {\bf262}, 263--283 (1985). Erratum: \emph{Nucl. Phys.} B {\bf 269}, 744--745. (1986).  %10.1016/0550-3213(85)90286-X
\bibitem{Croon} D. Croon, A. Kusenko, A. Mazumdar, and G. White, Solitosynthesis and gravitational waves, \emph{Phys. Rev.} D {\bf101}, 085010 (2020).%10.1103/PhysRevD.101.085010
\bibitem{Enqvist} K. Enqvist and J. McDonald, Q-balls and baryogenesis in the MSSM, \emph{Phys. Lett.} B {\bf425}, 309-321 (1998). %10.1016/S0370-2693(98)00271-8
\bibitem{FriedbergLee1} R. Friedberg, T. D. Lee, and Y. Pang, Scalar soliton stars and black holes, \emph{Phys. Rev.} D {\bf35}, 3658 (1987).
\bibitem {FriedbergLee2} R. Friedberg, T.D. Lee, and A. Sirlin, A class of scalar-field soliton solutions in three space dimensions, \emph{Phys. Rev.} D {\bf13}, 2739--2761 (1976).
\bibitem{GaoYang} Z. Gao and Y. Yang,  Existence of dyons in minimally gauged Skyrme model via constrained minimization. \emph{J. Funct. Anal.} {\bf262}, 3602--3625 (2012).
\bibitem{Gulamov14} I. E. Gulamov, E. Y. Nugaev, and M. N. Smolyakov, Theory of $U(1)$ gauged Q-balls revisited, \emph{Phys. Rev.} D {\bf89}, 085006 (2014). %10.1103/PhysRevD.89.085006
\bibitem{Gulamov15} I. E. Gulamov, E. Y. Nugaev, A. G. Panin, and M. N. Smolyakov, Some properties of $U(1)$ gauged Q-balls, \emph{Phys. Rev.} D {\bf92}, 045011 (2015). %10.1103/PhysRevD.92.045011
\bibitem{HeeckRa1} J. Heeck, A. Rajaraman, R. Riley, and C. B. Verhaaren, Proca Q-balls and Q-shells,  \emph{J. High Energy Phys.} {\bf10}, 103 (2021).
\bibitem{HeeckRa2} J. Heeck, A. Rajaraman, R. Riley, and C. B. Verhaaren,  Mapping gauged Q-balls,  \emph{Phys. Rev.} D {\bf103}, 116004 (2021).
\bibitem{HongKa} J.-P. Hong, M. Kawasaki, and M. Yamada, Charged Q-balls in gauge mediated SUSY breaking models, \emph{Phys. Rev.} D {\bf92}, 063521 (2015).%10.1103/PhysRevD.92.063521
\bibitem{KinachChop}  M. P. Kinach and M. W. Choptuik, Dynamical evolution of $U(1)$ gauged Q-balls in axisymmetry, \emph{Phys. Rev.} D {\bf107}, 035022 (2023).
\bibitem{Krylov} E. Krylov, A. Levin, and V. Rubakov, Cosmological phase transition, baryon asymmetry, and dark matter Q-balls, \emph{Phys. Rev.} D {\bf87}, 083528 (2013).
    % 10.1103/PhysRevD.87.083528
\bibitem{KusenkoKu} A. Kusenko, V. Kuzmin, M. Shaposhnikov, and P. Tinyakov, Experimental signatures of supersymmetric dark-matter Q-balls, \emph{Phys. Rev. Lett.} {\bf80}, 3185-3188 (1998).%10.1103/PhysRevLett.80.3185


\bibitem{KusenkoSh} A. Kusenko and M. Shaposhnikov, Supersymmetric Q-balls as dark matter, \emph{Phys. Lett.} B {\bf418}, 46-54 (1998). %10.1016/S0370-2693(97)01375-0

\bibitem{KusenkoSt} A. Kusenko and P. J. Steinhardt, Q-ball candidates for self-interacting dark matter, \emph{Phys. Rev. Lett.} {\bf87}, 141301 (2001).  %10.1103/PhysRevLett.87.141301

\bibitem{LeeStein} K. Lee, J. A. Stein-Schabes, R. Watkins, and L. M. Widrow, Gauged Q-balls, \emph{Phys. Rev.} D {\bf39}, 1665-1673 (1989). %10.1103/PhysRevD.39.1665

\bibitem{LeeYoon} C. H. Lee and S. U. Yoon, Existence and stability of gauged nontopological solitons, \emph{Mod. Phys. Lett.} A {\bf6}, 1479-1486 (1991). %10.1142/S0217732391001597

\bibitem{LeePang}T. D. Lee and Y. Pang, Nontopological solitons,  \emph{Phys. Rept.}  {\bf 221} (1992) 251--350.

\bibitem{LeviGleiser} T. S. Levi and M. Gleiser, Gauged fermionic Q-balls, \emph{Phys. Rev.} D {\bf66}, 087701 (2002).%10.1103/physrevd.66.087701

\bibitem{LinYang} F. Lin and Y. Yang,  Existence of dyons in the coupled Georgi--Glashow--Skyrme model. \emph{Ann. Henri Poincar\'{e}} {\bf12}, 329--349 (2011).


\bibitem{yu} A. Yu.  Loginov and V.V. Gauzshtein, Radially  excited  $U(1)$ gauged Q-balls, \emph{Phys. Rev.} D {\bf102}, 025010 (2020).

\bibitem{LoiShn} V. Loiko and Ya. Shnir, Q-balls in the $U(1)$ gauged Friedberg--Lee--Sirlin model, \emph{Phys. Lett.} B {\bf797}, 134810 (2019).


\bibitem{NuagevSh} E. Ya. Nugaev and  A. V. Shkerin, Review of Nontopological Solitons in Theories with  $U(1)$-Symmetry, \emph{J. Exp. Theor. Phys.} {\bf130}, 301--320 (2020).


\bibitem{PaninSm} A.G. Panin and M.N. Smolyakov, Problem with classical stability of $U(1)$ gauged Q-balls, \emph{Phys. Rev.} D {\bf95}, 065006 (2017).


\bibitem{Ponton} E. Pont\'{o}n, Y. Bai, and B. Jain, Electroweak symmetric dark matter balls, \emph{J. High Energy Phys.} {\bf09}, 011 (2019). %10.1007/s13130-019-11194-5


\bibitem{Rosen996} G. Rosen, Particlelike solutions to nonlinear complex scalar field theories with positive-definite energy densities, \emph{J. Math. Phys.} {\bf9}, 996--998 (1968). %10.1063/1.1664693

\bibitem{Rosen999} G. Rosen, Charged particlelike solutions to nonlinear complex scalar field theories, \emph{J. Math. Phys.} {\bf9}, 999--1002 (1968). %10.1063/1.1664694

\bibitem{SchecWe} M. Schechter and R. Weder, A theorem on the existence of dyon solutions, \emph{Ann. Phys.} {\bf132}, 292--327 (1981). %10.1016/0003-4916(81)90070-1

\bibitem{Tamaki} T. Tamaki and N. Sakai, Large gauged Q-balls with regular potential, \emph{Phys. Rev.} D {\bf90}, 085022 (2014). %10.1103/PhysRevD.90.085022

\bibitem{whitePe} G. White, L. Pearce, D. Vagie, and A. Kusenko, Detectable gravitational wave signals from Affleck-Dine baryogenesis, \emph{Phys. Rev. Lett.} {\bf127}, 181601 (2021).

\bibitem{Yang98} Y. Yang, Dually charged particle-like solutions in the Weinberg--Salam theory, \emph{Proc. Roy. Soc. A} {\bf454} 155--178 (1998). %10.1098/rspa.1998.0152

\bibitem{Yangbook} Y. Yang, {\em Solitons in Field Theory and Nonlinear Analysis}, Springer, New York, 2001.




\end{thebibliography}
\end{document}